\newcommand{\ie}{i.\,e.\xspace}
\newcommand{\eg}{e.\,g.\xspace}
\newcommand{\etal}{et al.\xspace}
\def\zeit{\number\shorthour:\ifnum\shortminute<10 0\number\shortminute
\else\number\shortminute\fi}
\begin{document}

\author{Elisabetta Bergamini \and Henning Meyerhenke \and Christian L. Staudt}
\institute{Institute of Theoretical Informatics, Karlsruhe Institute of Technology (KIT), Germany \\
  Email: \email{\{elisabetta.bergamini, meyerhenke, christian.staudt\}\,@\,kit.edu}
}

\title{Approximating Betweenness Centrality in Large Evolving Networks}

\date{}

\maketitle

\begin{abstract}
Betweenness centrality ranks the importance of nodes by their participation in \emph{all} shortest paths of the network.
Therefore computing exact betweenness values is impractical in large networks. For static networks,
approximation based on randomly sampled paths has been shown to be significantly faster in practice.
However, for dynamic networks, no approximation algorithm for betweenness centrality
is known that improves on static recomputation. 
We address this deficit by proposing two incremental approximation algorithms (for weighted and unweighted connected graphs) which provide a provable guarantee on the absolute approximation error. 
Processing batches of edge insertions, our algorithms yield significant speedups up to a factor of $10^4$ compared to restarting the approximation.
This is enabled by investing memory to store and efficiently update shortest paths.
As a building block, we also propose an asymptotically faster algorithm for updating the SSSP problem in unweighted graphs.
Our experimental study shows that our algorithms are the first to make in-memory computation of a betweenness ranking practical for million-edge semi-dynamic networks. Moreover, our results show that the accuracy is even better than the theoretical guarantees in terms of absolutes errors and the rank of nodes is well preserved, in particular for those with high betweenness.
\\[0.25ex]

\noindent \textbf{Keywords:}  betweenness centrality, algorithmic network analysis, graph algorithms, approximation algorithms, shortest paths
\end{abstract}


\section{Introduction}
\label{sec:introduction}
The algorithmic analysis of complex networks has become 
a highly active research area recently. One important task in network analysis is to rank
nodes by their structural importance using \emph{centrality measures}.
%
\emph{Betweenness
centrality} (BC) measures the participation of a node
in the shortest paths of the network. 
Let the graph $G$ represent a network with $n$ nodes and $m$ edges.
Naming $\sigma_{st}$ the number of shortest paths from a node $s$ to a node $t$ and $\sigma_{st}(v)$ the number of shortest paths from $s$ to $t$ that go through $v$, the (normalized) BC
of $v$ is defined as~\cite{citeulike:1025135}:
$c_{B}(v)=\frac{1}{n(n-1)}\sum_{s\neq v \neq t}\frac{\sigma_{st}(v)}{\sigma_{st}}$.
 Nodes with high betweenness can be important in routing, spreading processes
and mediation of interactions.
Depending on the context, this can mean, for example, finding the
most influential persons in a social network, the key infrastructure
nodes in the internet, or super spreaders of a disease.

The fastest existing method for computing BC is due to Brandes~\cite{Brandes01betweennessCentrality}. It
requires $\Theta(nm)$ operations for unweighted graphs and $\Theta(nm+n^{2}\log n)$
for graphs with positive edge weights. 
This time complexity is prohibitive for large-scale graphs with millions of edges, though.
Recent years have seen the publication of several ap proximation algorithms
that aim to reduce the computational effort, while finding BC values
that are as close as possible to the exact
ones. Good results have been obtained in this regard; in particular,
a recent algorithm by Riondato and Kornaropoulos (\textsf{RK})~\cite{DBLP:conf/wsdm/RiondatoK14}  gives probabilistic guarantees on the quality of the approximation, and we build our algorithms on this method.

\paragraph{Motivation.}
Large graphs of interest, such as the Web and social networks,
evolve continuously. Thus, in addition to the identification of important 
nodes in a static network, an issue of great interest is the evolution of centrality 
values in dynamically changing networks.  
So far, there are no approximation
algorithms that efficiently update BC scores rather than recomputing them from scratch.
A few methods have been proposed to update the BC values after a graph modification, which for some of the
algorithms can only be one edge insertion and for others can also be
one edge deletion. However, all of these approaches are exact and have
a worst-case time complexity which is the same as Brandes's algorithm (\textsf{BA})
\cite{Brandes01betweennessCentrality} and a memory footprint of at least $O(n^2)$. 

\paragraph{Contribution.}
In this paper, we present the first approximation algorithms for BC in evolving networks.
They are the first incremental algorithms (\ie they update the betweenness values of all nodes
in response to edge insertions or weight decrease operations in the graph) that can actually be used in large streaming graphs.
Our two algorithms \textsf{IA} (for \textit{incremental approximation}) and \textsf{IAW} (for \textsf{IA} \emph{weighted})
work for 
unweighted and weighted networks, respectively. 
Even though only edge insertions and weight decreases are supported as dynamic operations,
we do not consider this a major limitation since many real-world dynamic networks evolve only 
this way 
and do not shrink.
The algorithms we propose are also the first that can handle a batch
of edge insertions (or weight decrease
operations) at once. After each batch of updates, we guarantee that
the approximated BC values differ by at most $\epsilon$
from the exact values with probability at least $1-\delta$, where
$\epsilon$ and $\delta$ can be arbitrarily small constants. 
Running time and memory required depend on how tightly the error should be bounded.
As part of our algorithm \textsf{IA}, we also propose a new algorithm with lower time complexity for updating single-source shortest paths in unweighted graphs after a batch of edge insertions.

Our experimental study shows that our algorithms are the first to make in-memory computation of a betweenness ranking practical for large dynamic networks.
For comparison, we create and evaluate the first implementation of the dynamic BC algorithm by Nasre \etal~\cite{DBLP:journals/corr/NasrePR13} and observe its lack of speedup and scalability.
We achieve a much improved scaling behavior,
enabling us to update approximate betweenness scores in a network with 16 million edges in a few seconds on typical workstation hardware,
where previously proposed dynamic algorithms for betweenness would fail by their memory requirements alone.
More generally, processing batches of edge insertions, our algorithms yield significant speedups (up to factor $10^4$) compared to restarting the approximation.
Regarding accuracy, our experiments show that the estimated absolute errors are always lower than the guaranteed ones.
Also the rank of nodes is well preserved, in particular for those with high betweenness.

\section{Related work}
\label{sec:related_work}

\paragraph*{Static BC algorithms - exact and approximation.}

The fastest existing method for the exact BC computation, \textsf{BA},
requires $\Theta(nm)$ operations for unweighted graphs
and $\Theta(nm+n^{2}\log n)$ for graphs with positive edge weights~\cite{Brandes01betweennessCentrality}.
The algorithm computes for every node $s\in V$
a slightly-modified version of a single-source shortest-path tree
(SSSP tree), producing for each $s$ the directed acyclic
graph (DAG) of \emph{all} shortest paths starting at $s$. Exploiting
the information contained in the DAGs, the algorithm computes the \textit{dependency}
 $\delta_{s}(v)$ for each node $v$, that is
the sum over all nodes $t$ of the fraction of shortest paths between
$s$ and $t$ that $v$ is internal to. The betweenness
of each node $v$ is simply the \emph{sum} over all sources $s\in V$
of the dependencies $\delta_{s}(v)$. Therefore, we can
see the dependency $\delta_{s}(v)$ as a \emph{contribution}
that $s$ gives to the computation of $c_{B}(v)$.
Based on this concept, some algorithms for an
\emph{approximation} of BC have been developed.
Brandes and Pich~\cite{DBLP:journals/ijbc/BrandesP07} propose to
approximate $c_{B}(v)$ by extrapolating it from the contributions
of a \emph{subset} of source nodes, also
called \emph{pivots}. Selecting the pivots uniformly at random, the
approximation can be proven to be an unbiased
estimator for $c_{B}(v)$ (i.e.\, its expectation is equal to $c_{B}(v)$).
In a subsequent work, Geisberger \etal~\cite{DBLP:conf/alenex/GeisbergerSS08}
notice that this can produce an overestimation of BC scores
of nodes that happen to be close to the sampled pivots. To limit
this bias, they introduce
a \emph{scaling function} which gives less importance to
contributions from pivots that are close to the node. 
Bader \etal~\cite{DBLP:conf/waw/BaderKMM07} approximate
the BC of a specific node only, based on 
an adaptive sampling technique that reduces the number of pivots for nodes with high centrality.
Different from the previous approaches is the approximation algorithm
of Riondato and Kornaropoulos~\cite{DBLP:conf/wsdm/RiondatoK14},
which samples a \emph{single} random shortest path at each iteration.
This approach allows theoretical guarantees on the quality of their approximation:
For any two constants $\epsilon,\delta\in(0,1)$, a number of samples
can be defined such that the error on the approximated values is at
most $\epsilon$ with probability at least $1-\delta$.
Because of this guarantee, we use this algorithm as a 
building block of our new approach and refer to it as \textsf{RK}.

\paragraph*{Exact dynamic algorithms.}
Dynamic algorithms update the betweenness values of all nodes
in response to a modification on the graph, which might be an edge
insertion, an edge deletion or a change in an edge's weight. 
The first published approach
is \textsf{QUBE} by Lee \etal~\cite{Lee12qube:a}, which relies on the decomposition
of the graph into connected components. When an edge update occurs,
\textsf{QUBE} re-computes the centrality values using \textsf{BA} only
within the affected component. In case the update modifies the
decomposition, this must be recomputed, and new centralities must
be calculated for all affected components. The approach proposed
by Green \etal~\cite{DBLP:conf/socialcom/GreenMB12} for unweighted graphs 
maintains a structure with the previously calculated BC values and
additional information, like the distance
of each node $v$ from every source $s\in V$ and the list of \textit{predecessors}, i.e.\ the nodes
immediately preceding $v$ in all shortest paths from $s$ to $v$. Using this information,
the algorithm tries to limit the recomputations to the nodes whose
betweenness has actually been affected. Kourtellis
\etal~\cite{DBLP:journals/corr/KourtellisMB14} modify the
approach by Green \etal~\cite{DBLP:conf/socialcom/GreenMB12} in
order to reduce the memory requirements. Instead of
storing the predecessors of each node $v$ from each possible
source, they recompute them every time the information is required. 
Kas \etal~\cite{DBLP:conf/asunam/KasWCC13} extend an existing algorithm for
the dynamic APSP problem by Ramalingam and Reps~\cite{Ramalingam92anincremental}
to also update BC scores.
The recent work by Nasre 
\etal~\cite{DBLP:journals/corr/NasrePR13} contains the first dynamic
algorithm for BC (\textsf{NPR}) which is asymptotically faster than recomputing 
from scratch on certain inputs. In particular, when only edge insertions are allowed and
the considered graph is sparse and weighted, their algorithm
takes $O(n^2)$ operations, whereas \textsf{BA} requires
$O(n^2\log n)$ on sparse weighted graphs. Among other things, our 
paper contains the first experimental results obtained by an implementation of this algorithm.

All dynamic algorithms mentioned perform better than
recomputation on certain inputs. Yet, none of them
has a worst-case complexity better than \textsf{BA} on \textit{all graphs} since
all require an update of an APSP problem.
 For this problem, no algorithm exists
which has better worst-case running time than recomputation~\cite{DBLP:journals/algorithmica/RodittyZ11}. In addition, the problem of updating BC
seems even harder than the dynamic APSP problem. Indeed,
the dependencies (and therefore BC) might
need to be updated even on nodes whose distance from the source
has not changed, as they could be part of new shortest paths or not
be part of old shortest paths any more.

\paragraph*{Dynamic SSSP algorithms.} Since our algorithms require the update of 
an SSSP solution, we briefly review also the main results on the incremental SSSP problem 
(i.e.\, the problem of updating distances from a source node after an edge insertion or a batch of edge insertions). 
The most well-known algorithms for single-edge updates are 
\textsf{RR}~\cite{Ramalingam96onthe} by Ramalingam and Reps,
 and \textsf{FMN}~\cite{Frigioni_fullydynamic} by Frigioni \etal
The two approaches are implemented and compared in~\cite{Frigioni97experimentalanalysis},
 showing that \textsf{RR} is the faster one.
For the batch problem, the authors of ~\cite{Ramalingam92anincremental}
propose a novel algorithm named \textsf{SWSF-FP}.
 In~\cite{Frigioni_semi-dynamicalgorithms}
a batch variant of \textsf{FMN} is presented. The experimental analysis by Bauer
 and Wagner~\cite{DBLP:conf/wea/BauerW09} reports the results of the
two algorithms and of some tuned variants of \textsf{SWSF-FP}. The recent study by D'Andrea
 \etal~\cite{DBLP:conf/wea/DAndreaDFLP14} compares the
performances of the algorithms considered in~\cite{DBLP:conf/wea/BauerW09} 
with those of a novel approach~\cite{conf/sirocco/DAndreaDFLP13}.
In particular, for batches of only edge insertions, the tuned variant \textsf{T-SWSF} 
has shown significantly better results than other published algorithms.
Therefore we use \textsf{T-SWSF} as a building block of our incremental BC approximation algorithm.


\section{\textsf{RK} algorithm}
\label{sec:rk}

\begin{wrapfigure}{r}{0.35\textwidth}
  \begin{center}
  \vspace{-8ex}
    \includegraphics[width=0.3\textwidth]{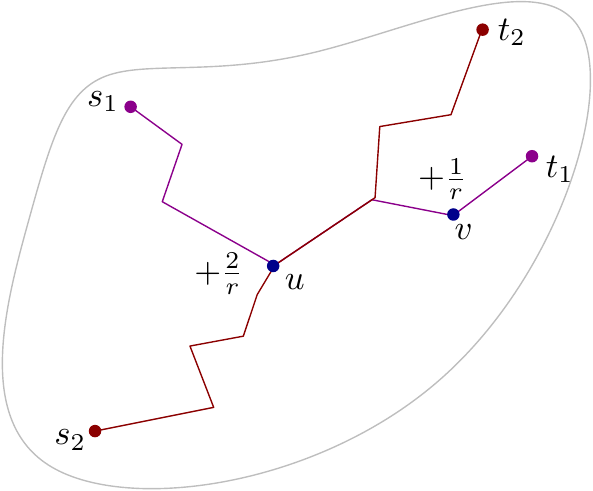}
  \end{center}
  \caption{Sampled paths and score update in the \textsf{RK} algorithm}
  \label{fig:rk-idea}
\end{wrapfigure}
In this section we briefly describe the algorithm by Riondato and Kornaropoulos \textsf{(RK)}~\cite{DBLP:conf/wsdm/RiondatoK14}, 
the foundation for our incremental approach. The idea of \textsf{RK} is to sample a set $S =\{p_{(1)},...,p_{(r)}\}$ of $r$ shortest paths between randomly sampled source-target pairs $(s, t)$.
Then, \textsf{RK} computes the approximated
betweenness centrality $\tilde{c}_B(v)$ of a node $v$ as the
fraction of sampled paths $p_{(k)}\in S$ that $v$ is internal
to, by adding $\frac{1}{r}$ to the node's score for each of these paths.
Figure~\ref{fig:rk-idea} illustrates an example where the sampling of two shortest paths leads to $\frac{2}{r}$ and $\frac{1}{r}$ being added to the score of $u$ and $v$, respectively.
Each possible shortest path $p_{st}$ has the following probability of being sampled in each of the $r$ iterations:
\begin{equation}
\pi_{G}(p_{st})=\frac{1}{n(n-1)}\cdot\frac{1}{\sigma_{st}}
\end{equation}

\noindent The number $r$ of samples required to approximate BC scores with the given error guarantee is calculated as

\begin{equation}
r=\frac{c}{\epsilon^{2}}\left(\lfloor\log_{2}\left(VD(G)-2\right)\rfloor+1+\ln\frac{1}{\delta}\right),
\label{sample_size}
\end{equation}

\noindent where $\epsilon$ and $\delta$ are constants in $(0,1)$, $c \approx 0.5$ and $VD(G)$ is the vertex diameter of $G$, i.e.\ the
number of nodes in the shortest path of $G$ with maximum number of nodes. In unweighted graphs $VD(G)$ coincides with \textsf{diam}$(G)+1$,
where \textsf{diam}$(G)$ is the number of edges in the longest
shortest path. In weighted graphs $VD(G)$ and the (weighted) diameter \textsf{diam}$(G)$ (\ie the length of the longest shortest path)
are unrelated quantities. 
The following error guarantee holds:

\begin{lemma}
\textup{\cite{DBLP:conf/wsdm/RiondatoK14}} \label{lem:1} If $r$ shortest paths
are sampled according to the above-defined probability distribution
$\pi_{G}$, then with probability at least $1-\delta$ the approximations
$\tilde{c}_B(v)$ of the betweenness centralities are within $\epsilon$
from their exact value:
$
\Pr(\exists v\in V\: s.t.\:|c_{B}(v)-\tilde{c}_B(v)|>\epsilon)<\delta.
$

\end{lemma}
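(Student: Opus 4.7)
The statement is a uniform convergence result: for every node $v$, the empirical fraction $\tilde{c}_B(v)$ must be close to its expectation $c_B(v)$ simultaneously. My plan is to cast this as an $\varepsilon$-sample problem on a suitable range space and invoke VC-dimension theory.

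First I would reformulate the quantities probabilistically. Define the ground set $\mathcal{P}$ as the set of all shortest paths in $G$, equipped with the probability distribution $\pi_G$. For each $v\in V$, let $R_v=\{p\in\mathcal{P}:v \text{ is internal to } p\}$, and let $f_v=\mathbf{1}_{R_v}$. A short calculation using the definition of $\pi_G$ shows
\begin{equation}
\mathbb{E}_{p\sim\pi_G}[f_v(p)] \;=\; \sum_{s\neq t}\frac{1}{n(n-1)}\cdot\frac{\sigma_{st}(v)}{\sigma_{st}} \;=\; c_B(v),
\end{equation}
so that $c_B(v)$ is an expectation and $\tilde{c}_B(v)=\frac{1}{r}\sum_{k=1}^r f_v(p_{(k)})$ is its empirical mean over the $r$ i.i.d.\ samples. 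The claim is then exactly that $\sup_{v\in V}\bigl|\mathbb{E}[f_v]-\tfrac{1}{r}\sum_k f_v(p_{(k)})\bigr|\le\varepsilon$ with probability at least $1-\delta$.

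Next I would apply the standard $\varepsilon$-sample theorem for range spaces of bounded VC-dimension: if $(\mathcal{P},\mathcal{R})$ has VC-dimension at most $d$, then an i.i.d.\ sample of size $r\ge \frac{c}{\varepsilon^2}\bigl(d+\ln\frac{1}{\delta}\bigr)$ is an $\varepsilon$-sample with probability at least $1-\delta$, i.e.\ empirical and true measures of every range differ by at most $\varepsilon$. With the range space $\mathcal{R}=\{R_v:v\in V\}$, this immediately yields the conclusion once we match $d$ to the expression in~\eqref{sample_size}.

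The real work, and the step I expect to be the main obstacle, is the VC-dimension bound
\begin{equation}
\mathrm{VC}(\mathcal{P},\mathcal{R}) \;\le\; \lfloor\log_2(VD(G)-2)\rfloor+1.
\end{equation}
Here I would argue by contradiction: suppose a set $\{p_1,\dots,p_k\}$ of shortest paths is shattered, so that for every subset $T\subseteq\{1,\dots,k\}$ there exists a node $v_T\in V$ internal to exactly the paths $\{p_i:i\in T\}$. The combinatorial constraint comes from the fact that any two such witnesses $v_T,v_{T'}$ lying on a common shortest path $p_i$ must appear in a fixed order on $p_i$ (otherwise one could derive a shorter $s$-$t$ walk, contradicting that $p_i$ is a shortest path). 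Using this ordering constraint across all $2^k$ witnesses, one can extract a single shortest path that must contain roughly $2^{k-1}+1$ internal vertices (hence $2^{k-1}+2$ nodes in total), forcing $VD(G)\ge 2^{k-1}+2$, i.e.\ $k\le \lfloor\log_2(VD(G)-2)\rfloor+1$. Plugging this $d$ into the $\varepsilon$-sample theorem yields exactly the sample size in equation~\eqref{sample_size}, completing the proof.
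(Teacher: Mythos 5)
This paper does not prove Lemma~\ref{lem:1} itself: the result is imported as a black box from Riondato and Kornaropoulos~\cite{DBLP:conf/wsdm/RiondatoK14}, and your proposal correctly reconstructs exactly the argument of that cited source --- recasting $c_B(v)$ as $\mathbb{E}_{p\sim\pi_G}\left[\mathbf{1}_{R_v}(p)\right]$ and $\tilde{c}_B(v)$ as its empirical mean, invoking the $\varepsilon$-sample theorem with sample size $\frac{c}{\varepsilon^2}\left(d+\ln\frac{1}{\delta}\right)$, and bounding the VC dimension of the range space $\{R_v\}_{v\in V}$ by $\lfloor\log_2(VD(G)-2)\rfloor+1$, which plugged in gives precisely the $r$ of equation~(\ref{sample_size}). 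Two small cleanups to your VC-dimension step: the ordering argument you anticipate is unnecessary, since in a shattered set $\{p_1,\dots,p_k\}$ distinct subsets $T\ni i$ must be witnessed by distinct nodes (a fixed node is internal to one fixed collection of paths), and all $2^{k-1}$ witnesses for subsets containing $i$ lie internally on the single path $p_i$; and the resulting count is $2^{k-1}$ internal vertices (not $2^{k-1}+1$), which still yields $VD(G)\ge 2^{k-1}+2$ and hence $k\le\lfloor\log_2(VD(G)-2)\rfloor+1$ as you conclude.
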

To sample the shortest paths according to $\pi_{G}$, \textsf{RK} first chooses
a node pair $(s,t)$ uniformly at random and performs an SSSP search from $s$, keeping also track of the number $\sigma_{sv}$
of shortest paths from $s$ to $v$ and of the list of predecessors
$P_{s}(v)$ for any node $v$. Then one shortest path is selected: 
Starting from $t$, a predecessor $z\in P_{s}(t)$
is selected with probability $\sigma_{sz}/\sum_{w\in P_{s}(t)}\sigma_{sw}=\sigma_{sz}/\sigma_{st}$.
The sampling is repeated iteratively
until node $s$ is reached. Algorithm \ref{algo1} in the Appendix is the
pseudocode for \textsf{RK}. Function \texttt{computeExtendedSSSP} is 
an SSSP algorithm that keeps track of the number of shortest paths and
of the list of predecessors while computing distances, as in \textsf{BA}~\cite{Brandes01betweennessCentrality}.

\paragraph{Approximating the vertex diameter.}

The authors of \textsf{RK} propose two upper bounds on the vertex diameter that can both be computed in $O(n+m)$, instead of solving an APSP problem. For connected unweighted undirected graphs, they compute a SSSP from a randomly-chosen node $s$ and approximate $VD$ as the sum of the two shortest paths starting from $s$ with maximum length. For the remaining graph classes (directed and/or
weighted), the authors approximate $VD$ with the
size of the largest weakly connected component, which, in case of connected graphs,
is equal to the number of nodes, an overestimation for complex networks.
It is important to note that, when the graph is connected and only edge insertions are allowed, the size of both approximations cannot increase as the
graph evolves. This is the key observation on which our method relies.

\section{Incremental approximation algorithms}
\label{sec:dyn-algo}
\subsection{Update after a batch}

\begin{wrapfigure}{r}{0.35\textwidth}
  \begin{center}
  \vspace{-6ex}
    \includegraphics[width=0.25\textwidth]{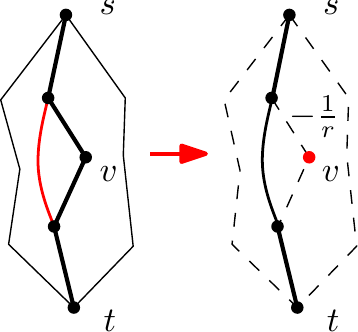}
  \end{center}
  \caption{Updating shortest paths and BC scores}
  \label{fig:path-update}
  \vspace{-4ex}
\end{wrapfigure}
Our incremental algorithms are composed of two phases: an \textit{initialization} phase, which executes \textsf{RK} on the initial graph, and an \textit{update} phase, which recomputes the approximated BC scores after a sequence of edge updates. 
Figure~\ref{fig:path-update} shows an example to illustrate the basic idea: Several shortest $(s,t)$-paths exist, of which one has been sampled. An edge insertion shortens the distance between $s$ and $t$, making the shortest path unique and excluding the node $v$, from whose score $\frac{1}{r}$ must be subtracted. 
From this point on, we give a formal description and only consider edge insertions. We suppose the graph is undirected, but our results can be easily extended to weight decreases and directed graphs. 
A batch $\beta = \{e_1,...,e_k\}$ of edges $e_i=\{u_i,v_i\}$ is inserted into a connected graph
$G$ and we show how the approximated BC scores 
can be updated.
Let $G'=(V,E\cup \beta)$ be the new graph,
let $d'_{s}(t)$ denote the new distance between any node pair $(s,t)$
and let $\sigma'_{st}$ be the new number of shortest paths between $s$ and $t$.
Let $\mathcal{S}_{st}$ and $\mathcal{S}'_{st}$ be the old and the new set of shortest paths between $s$ and $t$, respectively.
A new set $ S'=\{p'_{(1)},...,p'_{(r)}\} $
of shortest paths has to be sampled now in order to let Lemma~\ref{lem:1}
hold for the new configuration; in particular, the probability  $\Pr(p'_{(k)}=p'_{st})$ of
each shortest path $p'_{st}$ to be sampled
must be equal to $\pi_{G'}(p'_{st})=\frac{1}{n(n-1)}\cdot\frac{1}{\sigma'_{st}}$.
Clearly, one could rerun \textsf{RK} on the new graph, but we can be more efficient:
 We recall that the upper bound
on the vertex diameter, and therefore the number $r$ of
samples, cannot increase after the insertion if the graph is connected.
Given any old sampled path $p_{st}$, we can \emph{keep}
$p_{st}$ if the set of shortest paths between $s$ and $t$ has not
changed, and \emph{replace} it with a new path between $s$ and
$t$ otherwise. Then, the following lemma holds:

\begin{lemma}
\label{lem:2}Let $ S$ be a set of shortest paths of $G$
sampled according to $\pi_{G}$. Let $\mathcal{P}$ be the procedure
that creates $ S'$ by substituting each path $p_{st}\in S$
with a path $p'_{st}$ according to the following rules:
\begin{enumerate}
\item $p'_{st}=p_{st}$ if $d'_{s}(t)=d_{s}(t)$ and $\sigma'_{st}=\sigma_{st}$
\item $p'_{st}$ selected uniformly at random among $\mathcal{S}'_{st}$
\mbox{otherwise}.
\end{enumerate}
Then, $p'_{st}$ is a shortest path of $G'$ and the probability of
any shortest path $p'_{xy}$ of $G'$ to be sampled at each iteration is
$\pi_{G'}(p'_{xy})$, i.e.\
\[
\Pr(p'_{(k)}=p'_{xy})=\frac{1}{n(n-1)}\cdot\frac{1}{\sigma'_{xy}},\quad k=1,...,r\text{ .}
\]
\end{lemma}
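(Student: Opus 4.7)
The plan is to verify the two claims separately: first that each $p'_{st}$ produced by $\mathcal{P}$ is indeed a shortest path in $G'$, and then to compute, for an arbitrary fixed shortest path $p'_{xy}$ of $G'$, the probability that it appears as the output of one iteration of $\mathcal{P}$.

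For the first claim, in rule~2 this is immediate by construction. For rule~1, I would observe that since $G' = (V, E \cup \beta)$ is obtained by edge insertions only, every path of $G$ remains a path of $G'$ with the same length, and distances in $G'$ can only decrease. Thus $p_{st}$ is still a path in $G'$ of length $d_s(t) = d'_s(t)$, hence a shortest path.

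For the second claim, I fix a shortest path $p'_{xy} \in \mathcal{S}'_{xy}$ and split according to which rule applies to the endpoint pair $(x,y)$. The key preliminary observation is that, by monotonicity of distances, every old shortest path between $x$ and $y$ is still a shortest path in $G'$ whenever $d'_x(y) = d_x(y)$; hence $\mathcal{S}_{xy} \subseteq \mathcal{S}'_{xy}$. Combined with $\sigma'_{xy} = \sigma_{xy}$ (equal cardinalities), this yields $\mathcal{S}_{xy} = \mathcal{S}'_{xy}$ whenever rule~1 is triggered. Therefore in case~1 the path $p'_{xy}$ belongs to $\mathcal{S}_{xy}$, and the probability it occurs in $S'$ at a given iteration equals the probability it was sampled from $G$ in the first place, namely $\pi_G(p'_{xy}) = \frac{1}{n(n-1)\sigma_{xy}} = \frac{1}{n(n-1)\sigma'_{xy}}$.

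In case~2 (i.e.\ $d'_x(y)\neq d_x(y)$ or $\sigma'_{xy}\neq\sigma_{xy}$), $p'_{xy}$ is produced only if the previous iteration sampled \emph{some} path with endpoints $(x,y)$ and then $p'_{xy}$ was picked uniformly at random out of $\mathcal{S}'_{xy}$. The marginal probability that the old iteration sampled any $(x,y)$-path equals $\sum_{p\in\mathcal{S}_{xy}}\pi_G(p) = \sigma_{xy}\cdot\frac{1}{n(n-1)\sigma_{xy}} = \frac{1}{n(n-1)}$, and multiplying by the uniform factor $\frac{1}{\sigma'_{xy}}$ again yields $\frac{1}{n(n-1)\sigma'_{xy}}$. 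Since the two cases are mutually exclusive given $(x,y)$ and lead to the same formula, the claim follows.

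The only nontrivial step is the equality $\mathcal{S}_{xy} = \mathcal{S}'_{xy}$ under rule~1's hypothesis; once that is established, the rest is straightforward case analysis and an application of the law of total probability. I do not expect any genuine obstacle, since the procedure was manifestly designed so that the mass the old distribution placed on obsolete paths is exactly the mass that needs to be redistributed, uniformly at random, over the new set $\mathcal{S}'_{xy}$, matching $\pi_{G'}$ by construction.
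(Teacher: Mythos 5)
Your proof is correct and takes essentially the same approach as the paper's: the paper likewise splits $\Pr(p'_{(k)}=p'_{xy})$ according to whether the set of shortest $(x,y)$-paths changed (phrased there as a total-probability decomposition over events $e_1$ and $e_2$, which, as you observe, is really a deterministic dichotomy given the fixed batch), obtaining $\frac{1}{n(n-1)}\cdot\frac{1}{\sigma'_{xy}}$ in both branches exactly as you do. Your explicit subset-plus-equal-cardinality argument for $\mathcal{S}_{xy}=\mathcal{S}'_{xy}$ under rule~1 spells out a step the paper leaves implicit, but it is the same argument in substance.
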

The proof can be found in Section~\ref{app:proof_lemma} of the Appendix.
Since the set of paths is constructed according to $\pi_{G'}$, Theorem~\ref{theorem_samples} follows directly from Lemma \ref{lem:1}.
\begin{theorem}
\label{thm:approx_guarantee}
Let $G=(V,E)$ be a connected graph and let $G'=(V,E\cup \beta)$
be the modified graph after the insertion of the batch $\beta$. 
Let $ S$ be a set of $r$ shortest paths of $G$
sampled according to $\pi_{G}$
and $r=\frac{c}{\epsilon^{2}}\left(\lfloor\log_{2}\left(VD(G)-2\right)\rfloor+1+\ln\frac{1}{\delta}\right)$
for some constants $\epsilon,\delta\in(0,1)$. Then, if a new set
$ S'$ of shortest paths of $G'$ is built according to procedure
$\mathcal{P}$ and the approximated values of betweenness centrality
$\tilde{c}'_B(v)$ of each node $v$ are computed as the fraction
of paths of $ S'$ that $v$ is internal to, then
\[
\Pr(\exists v\in V\: s.t.\:|c'_{B}(v)-\tilde{c}'_B(v)|>\epsilon)<\delta,
\]
where $c'_{B}(v)$ is the new exact value of betweenness centrality
of $v$ after the edge insertion.
\label{theorem_samples}
\end{theorem}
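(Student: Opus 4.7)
The plan is to derive Theorem~\ref{theorem_samples} as a direct consequence of Lemma~\ref{lem:1} applied to $G'$ together with Lemma~\ref{lem:2}, after checking that the sample size $r$ chosen for $G$ remains large enough for $G'$. Concretely, I want to show that the multiset $S'$ produced by procedure $\mathcal{P}$ constitutes a valid RK-style sample for the new graph $G'$, so that the approximation guarantee of \textsf{RK} transfers to the updated BC values $\tilde{c}'_B(v)$.

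First I would verify the sample size. The threshold in \eqref{sample_size} depends on $G$ only through $VD(G)$. Since $\beta$ contains only edge insertions and $G$ is connected, every shortest-path distance in $G'$ is bounded above by the corresponding distance in $G$, so no shortest path in $G'$ can have more nodes than the longest shortest path of $G$; thus $VD(G') \leq VD(G)$, and therefore
\begin{equation*}
r \;\geq\; \frac{c}{\epsilon^{2}}\left(\lfloor \log_{2}(VD(G')-2)\rfloor + 1 + \ln\tfrac{1}{\delta}\right).
\end{equation*}
Hence the sample size fixed at initialization already meets the \textsf{RK} threshold for $G'$.

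Next I would invoke Lemma~\ref{lem:2} to conclude that each replacement $p'_{(k)}$ is a shortest path of $G'$ sampled according to $\pi_{G'}$. The one additional point to spell out is \emph{joint} independence of the $r$ samples: the original paths $p_{(1)},\dots,p_{(r)}$ are independent by construction of \textsf{RK}, and procedure $\mathcal{P}$ transforms each $p_{(k)}$ using only $p_{(k)}$ and the (deterministic, for a fixed batch) quantities $d'_s,\sigma'_{st},\mathcal{S}'_{st}$ together with fresh randomness independent across $k$. Independence of the $p'_{(k)}$ therefore follows. At this point $S' = \{p'_{(1)},\dots,p'_{(r)}\}$ is an independent sample of size $r$ drawn from $\pi_{G'}$, which is exactly the hypothesis of Lemma~\ref{lem:1} applied to $G'$.

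Finally, because $\tilde{c}'_B(v)$ is defined as the fraction of paths in $S'$ that $v$ is internal to, it coincides with the \textsf{RK} estimator on $G'$ for this sample. Lemma~\ref{lem:1} then yields
\begin{equation*}
\Pr\bigl(\exists v\in V : |c'_B(v)-\tilde{c}'_B(v)|>\epsilon\bigr) < \delta,
\end{equation*}
which is the desired bound. I do not expect any serious obstacle; the only delicate points are (i) the monotonicity $VD(G')\leq VD(G)$ under edge insertions in a connected graph, which must be stated explicitly so that the $r$ chosen for $G$ still suffices, and (ii) the independence argument for the replacements, which is essentially a remark but worth making to avoid the impression that conditioning on the old sample $S$ spoils the distribution of $S'$.
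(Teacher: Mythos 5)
Your route is exactly the paper's: the paper proves this theorem in one line, observing that by Lemma~\ref{lem:2} the paths of $S'$ are distributed according to $\pi_{G'}$, so that Lemma~\ref{lem:1} applied to $G'$ gives the bound. Your explicit remark on joint independence of the replacements (each $p'_{(k)}$ is produced from $p_{(k)}$ and deterministic quantities using fresh randomness, so the $r$ samples remain independent) is correct and worth making; the paper leaves it implicit, and Lemma~\ref{lem:2} only states the per-iteration marginal distribution.

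There is, however, a genuine flaw in your step (i) for the weighted case, which this theorem is meant to cover (it underlies \textsf{IAW} as well as \textsf{IA}). From ``every shortest-path distance in $G'$ is bounded above by the corresponding distance in $G$'' you infer that no shortest path of $G'$ can have more nodes than the longest shortest path of $G$, hence $VD(G')\le VD(G)$. This conflates weighted distance with hop count and is false in general: inserting a light multi-hop detour can strictly increase the vertex diameter. Take $V=\{s,u,v,t\}$ with edges $\{s,t\}$ of weight $1$ and $\{s,u\},\{u,v\},\{v,t\}$ each of weight $100$; one checks $VD(G)=3$. After inserting $\{s,v\}$ and $\{u,t\}$, each of weight $0.1$, the unique shortest path from $u$ to $v$ is $u,t,s,v$ with $4$ nodes, so $VD(G')=4>VD(G)$. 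Your monotonicity claim is valid only for unweighted graphs, where distance equals hop count. The paper's actual argument sidesteps this: what cannot increase under edge insertions in a connected graph is not $VD$ itself but the \emph{upper bound} on $VD$ that \textsf{RK} uses to fix $r$ (for unweighted graphs the SSSP-based estimate, whose value can only shrink as distances shrink; for weighted or directed graphs the size of the connected component, i.e.\ $n$, which trivially remains an upper bound on $VD(G')$ since $\beta$ adds no nodes). This is the ``key observation'' stated in Section~\ref{sec:rk}. To repair your proof, replace $VD(G)$ in the sample-size check by the overestimate actually used to compute $r$, and argue that this quantity still dominates $VD(G')$; with that substitution the rest of your argument, and the appeal to Lemmas~\ref{lem:2} and~\ref{lem:1}, goes through as in the paper.
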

Algorithm~\ref{algo3}
shows the update procedure based on
Theorem~\ref{thm:approx_guarantee}. For each sampled node pair $(s_{i},t_{i}),\, i=1,...,r$,
we first update the SSSP DAG, a step which will be
discussed in Section~\ref{sec:Structure-updates}.
 In case the distance
or the number of shortest paths between $s_{i}$ and $t_{i}$ has
changed, a new shortest path is sampled uniformly as in \textsf{RK}. 
This means that $\frac{1}{r}$ is subtracted from the score of each node in the old shortest path and the same quantity
is added to the nodes in the new shortest
path. On the other hand, if both distances and number of shortest
paths between $s_{i}$ and $t_{i}$ have not changed, nothing needs
to be updated. 

Considering edges in a batch allows us to recompute the BC scores only once instead of doing it after each single edge update. Moreover, this gives us the possibility to use specific batch algorithms for the update of the SSSP DAGs, which process the nodes affected by multiple edges of $\beta$ only once, instead of for each single edge.

\begin{algorithm}[htb]
\begin{small}
\LinesNumbered
\SetKwData{B}{$\tilde{c}_B$}
\SetKwInOut{Input}{Input}\SetKwInOut{Output}{Output}
\Input{Graph $G=(V,E)$, source node $s$, number of iterations $r$, batch $\beta$}
\Output{New approximated BC values of nodes in $V$}
\For{$i \leftarrow 1$ \KwTo $r$}
{
	$d^{old}_i\leftarrow d_{s_i}(t_i)$\;
	$\sigma^{old}_i\leftarrow \sigma_{s_i}(t_i)$\;
	$(d_{s_i},\sigma_{s_i},P_{s_i})\leftarrow$\textsf{UpdateSSSP}($G,d_{s_i},\sigma_{s_i},P_{s_i},\beta$)\; \label{lst:line:update_sssp}
	\tcp{If the shortest paths between $s_i$ and $t_i$ have changed, we sample a new one}
	\If{$d_{s_i}(t_i)<d^{old}_i$ \textbf{or} $\sigma_{s_i}(t_i)\neq \sigma^{old}_i$}
	{
			\tcp{We subtract $1/r$ from all nodes in the old shortest path and we add $1/r$ to all nodes in the new shortest path}
			\ForEach{$w \in p_{(i)}$}
			{
				\B($w$) $\leftarrow \B(w)-1/r$\;
			}
			$v \leftarrow t_i$\;
			$p_{(i)}\leftarrow$ empty list\;
			\While{$P_{s_i}(v) \neq \{s_i\}$}
			{
				\mbox{sample $z \in P_{s_i}(v)$ with $\Pr=\sigma_{s_i}(z)/\sigma_{s_i}(v)$}\;
				$\B(z)\leftarrow \B(z)+1/r$\;
				add $z$ to $p_{(i)}$\;
				$v\leftarrow z$\;
			}
	}
}
\Return{$\{(v,\B(v)),\: v\in V\}$}
\caption{BC update after a batch $\beta$ of edge insertions}\label{algo3}
\end{small}
\end{algorithm}

\vspace{-3ex}

\subsection{SSSP Update Algorithms}
\label{sec:Structure-updates}
In Line~\ref{lst:line:update_sssp} of Algorithm~\ref{algo3}, we need an algorithm \textsf{UpdateSSSP} to update the information
about distances, predecessors and number of shortest paths.
This is an extension to the problem of updating a SSSP tree,
for which some approaches have been proposed in the literature (see Section~\ref{sec:related_work}).
For weighted graphs, we build our \textsf{UpdateSSSP} algorithm upon \textsf{T-SWSF}~\cite{DBLP:conf/wea/BauerW09}; given that no existing 
algorithm for the dynamic SSSP is asymptotically better than recomputing
from scratch, we base our choice on the results of the experimental 
studies~\cite{DBLP:conf/wea/BauerW09,DBLP:conf/wea/DAndreaDFLP14}. 
The difference between our \textsf{UpdateSSSP} for weighted graphs (\textsf{UpdateSSSPW}) and \textsf{T-SWSF} is that
we also need to update the list of predecessors and the number of shortest paths in addition to the distances.
For unweighted graphs, our \textsf{UpdateSSSP} is a \textit{novel} batch SSSP update approach which is in principle similar to 
\textsf{T-SWSF} but has better time bounds. 
In particular, naming $||A$|| the sum of the number of nodes \textit{affected} by the batch (\ie nodes whose distance or number of shortest paths
from the source has changed) and the number of their incident edges and naming $\beta$ the size
of the batch, our algorithm for unweighted graphs requires $O(|\beta |+ ||A || + d_{max})$ for an update, where $d_{max}$ is at most equal to the vertex diameter.
The time required by \textsf{UpdateSSSPW} (and \textsf{T-SWSF}) is instead $O(|\beta |\log |\beta | + ||A || \log ||A ||)$.
Our two algorithms \textsf{UpdateSSSPW} and \textsf{UpdateSSSP} lead to two
incremental BC approximation algorithms: \textsf{IAW}, which uses \textsf{UpdateSSSPW}, and \textsf{IA},
which uses \textsf{UpdateSSSP}.

\paragraph*{SSSP update for weighted graphs.}
Now we briefly describe \textsf{UpdateSSSPW} (Algorithm~\ref{algo:t-swsf} in Section~\ref{app:pseudocodes}), which is responsible for updating distances, 
list of predecessors and number of shortest paths in weighted graphs after a batch of edge insertions.
We say we \textit{relax} an edge $\{u,v\}$ when we compare $d(v)$ with $d(u)+\mathsf{w}(\{u,v\})$ and, 
in case $d(v)> d(u)+\mathsf{w}(\{u,v\})$, we insert
$v$ in a priority queue $Q$ with priority $p_Q(v) = d(u)+\mathsf{w}(\{u,v\})$ (and the same swapping $u$ and $v$). 
Our algorithm (and \textsf{T-SWSF}) can be divided into two parts: In the first part 
(Lines~\ref{lst:line:start1} -~\ref{lst:line:end1}), it relaxes all the edges of the batch.
In the second part (Lines~\ref{lst:line:main1} -~\ref{lst:line:main2}), each node $w$ in $Q$ is extracted, 
$d(w)$ is set to $p_Q(w)$ and all the incident edges of $w$ are relaxed. 
The algorithm continues until $Q$ is empty and therefore no node needs to be updated.

In this high-level structure, our algorithm is equal to \textsf{T-SWSF}. However, our \textsf{UpdateSSSPW} algorithm
needs to recompute also the information about the shortest
paths. This is accomplished during the scan of the incident edges: In Lines~\ref{lst:line:start2} -~\ref{lst:line:end2}, we
recompute the list of predecessors of node $w$ as the set of neighbors whose distance is lower than the distance
of $w$. The number of shortest paths is then recomputed as the sum of the number of shortest paths of the predecessors.
To study the complexity of \textsf{UpdateSSSPW}, we denote by $\beta$ the set of edges of the batch
and by $A$ the set of affected nodes (i.e.\, vertices whose distance or number of shortest paths
from the source has changed as a consequence of the batch). 
Let $|A|$ and $|\beta |$ represent the cardinality of $A$ and $\beta$, respectively. Moreover, let 
$||A ||$ represent the sum of the nodes in $A$ and of the edges that have at least one endpoint in $ A$.
Then, the following theorem holds.
\begin{theorem}
\label{thm:complexity}
The time required by \textsf{UpdateSSSPW} to update the distances, the number of shortest paths and the list of predecessors is $O(|\beta |\log |\beta | + ||A || \log ||A ||)$.
\end{theorem}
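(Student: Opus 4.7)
My plan is to bound the two phases of \textsf{UpdateSSSPW} separately and to charge all of the additional bookkeeping (predecessor lists and shortest-path counts) to work already accounted for by the standard \textsf{T-SWSF} analysis. Concretely, I would split the running time into (i) the batch-scan of Lines~\ref{lst:line:start1}--\ref{lst:line:end1}, (ii) the main loop of Lines~\ref{lst:line:main1}--\ref{lst:line:main2}, and (iii) the overhead for recomputing $P_{s}(w)$ and $\sigma_{s}(w)$ in Lines~\ref{lst:line:start2}--\ref{lst:line:end2}.

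For phase (i), each of the $|\beta|$ batch edges produces at most two relax attempts, each of which is an $O(\log|\beta|)$ priority-queue operation, giving an $O(|\beta|\log|\beta|)$ bound. For phase (ii), the central structural claim I would prove first is: a vertex $v$ can enter $Q$ from Lines~\ref{lst:line:main1}--\ref{lst:line:main2} only if its tentative distance strictly decreases, which by definition makes $v$ an affected node, so $v\in A$. Consequently $Q$ never holds more than $O(|\beta|+|A|)$ elements and every extract-min costs $O(\log\|A\|)$. When an affected $w$ is extracted, the algorithm scans its $\deg(w)$ incident edges once; summed over $w\in A$ this contributes exactly $\|A\|$ edge scans, each potentially triggering one priority-queue operation of cost $O(\log\|A\|)$. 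Hence phase (ii) is $O(\|A\|\log\|A\|)$.

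For (iii), I would argue that the predecessor recomputation for each extracted $w$ amounts to one pass over $w$'s neighbors: comparing $d(v)+\mathsf{w}(\{v,w\})$ with $d(w)$ for each neighbor $v$, and then summing $\sigma_{s}(v)$ over the neighbors that satisfy equality. Both tasks are $O(\deg(w))$, which is already absorbed by the edge scans of phase (ii); no extra logarithmic factor is introduced because these operations do not touch $Q$. Adding the three contributions yields the claimed $O(|\beta|\log|\beta|+\|A\|\log\|A\|)$ bound.

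The step I expect to require the most care is the structural claim that only vertices in $A$ ever reach $Q$ during the main loop, together with the fact that the algorithm may insert the same affected vertex several times before extracting it. I would handle this by showing that the total number of insertions is bounded by the number of edges incident to affected vertices, i.e.\ by $\|A\|$, so that both the amortized number of extract-mins and the amortized number of decrease-keys remain within the stated budget. Everything else is routine manipulation once this invariant is in place.
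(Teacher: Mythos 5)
Your decomposition and accounting follow essentially the same route as the paper's own (much terser) proof: $O(|\beta|\log|\beta|)$ for the batch scan, one extraction per affected vertex with a full neighbor scan in the main loop, and the recomputation of $P_s(w)$ and $\sigma_s(w)$ absorbed into those edge scans at no extra logarithmic cost. However, your central structural claim is stated incorrectly for this algorithm: you assert that a vertex can enter $Q$ during the main loop \emph{only if its tentative distance strictly decreases}. \textsf{UpdateSSSPW} deliberately relaxes with a \emph{non-strict} inequality ($d_s(z)\geq d_s(w)+\mathsf{w}(\{w,z\})$ in Lines~\ref{lst:line:start2}--\ref{lst:line:end2}, and likewise in the batch scan), so a vertex $z$ is also inserted when the tentative distance is \emph{equal} to its current one. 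These equality insertions are not incidental: they are precisely how changes to the number of shortest paths and to the predecessor lists propagate beyond the region where distances change, which is the feature distinguishing \textsf{UpdateSSSPW} from a plain distance-only update such as \textsf{T-SWSF}. Under your invariant as literally stated, such vertices would never enter $Q$, so your phase-(ii) accounting analyzes a different algorithm --- one that would fail to update $\sigma_s$ correctly.

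The gap is repairable inside your framework, precisely because $A$ is defined as the set of vertices whose distance \emph{or} number of shortest paths changed. When $z$ is inserted via an equality relaxation from an extracted vertex $w$, either $d_s(w)$ decreased, in which case $w$ is a \emph{new} predecessor of $z$ and $\sigma_s(z)$ grows, or $d_s(w)$ is unchanged but $\sigma_s(w)$ increased (with insertions only, the $\sigma$ value of a vertex with unchanged distance can only grow), in which case $\sigma_s(z)$ again changes since it is the sum of its predecessors' counts; the base case at the batch edges is analogous. An easy induction then shows every vertex entering $Q$ lies in $A$, which is the invariant you actually need: ``a vertex enters $Q$ only if its distance or its number of shortest paths changes.'' With that corrected, your remaining points --- at most one priority-queue operation per relax attempt via decrease-key, at most one extraction per vertex (re-insertion after extraction being impossible with positive weights), insertions amortized against edges incident to affected vertices, and the $O(\deg(w))$ recomputation of $P_s(w)$ and $\sigma_s(w)$ charged to the edge scan --- are all sound and yield the claimed $O(|\beta|\log|\beta|+\|A\|\log\|A\|)$ bound, matching the paper's argument.
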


\begin{proof}
In Lines~\ref{lst:line:start1} -~\ref{lst:line:end1} of Algorithm~\ref{algo:t-swsf}, we scan all $| \beta |$ elements of the batch and perform at most
one heap operation (insertion or decrease-key) for each of them, consequently requiring $O(|\beta |\log | \beta |)$ time.
In the main loop (Lines~\ref{lst:line:main1} -~\ref{lst:line:main2}), each affected node is extracted from $Q$ exactly once and its
neighbors are scanned (and possibly inserted/updated within $Q$). This leads to a $O(||A || \log ||A ||)$ complexity for the main loop.
The total time required by the algorithm is therefore $O(|\beta |\log |\beta | + ||A || \log ||A ||)$. 
\end{proof}

%

\paragraph*{A new SSSP update algorithm for unweighted graphs.}
In case of unweighted graphs, we present a new algorithm (Algorithm~\ref{algo:unweighted} of Section~\ref{app:pseudocodes}) that can
update the data structures with less computational effort than \textsf{updateSSSPW}.
Since in unweighted graphs the
distances from the source can only be discrete, we can see these distances
as \emph{levels}, which range from $0$ (the source)
to the distance $d_{max}$ of the furthest node from the source.
Therefore, we can replace the priority queue $Q$ of Algorithm~\ref{algo:t-swsf} 
with a list of FIFO queues containing, for each level $k$, the affected nodes
that would have priority $k$ in $Q$. As in Algorithm~\ref{algo:t-swsf}, we
first scan all the edges of the batch (Lines~\ref{lst2:line:start1} -~\ref{lst2:line:end1}), inserting the affected nodes
in the queues. Then (Lines~\ref{lst2:line:start2} -~\ref{lst2:line:end2}), we scan the queues in order of increasing 
distance from the source, in a similar way to the extraction of nodes
from the priority queue. Also in this case, we scan the incident edges to find the new predecessors
(and therefore update the list of predecessors and the number of shortest paths) (Lines~\ref{lst2:line:pred1} -~\ref{lst2:line:pred2})
and to detect other possibly affected nodes (Lines~\ref{lst2:line:succ1} -~\ref{lst2:line:succ2}).
In order not to insert a node in the queues multiple times, 
 we use colors. At the beginning, all the nodes are white;
then, the first time a node is scanned and inserted into a queue,
we set its color to gray, meaning that the node should not be inserted
into a queue any more. However, the target nodes of the batch might
need to be inserted in a queue more than once. Indeed, it is possible
that we initially insert node $v$ at level $d_{s}(u)+1$, but then
we find a shorter path during the main loop (or afterwards in the batch). 
For this reason, we also define another color, black, meaning that the node 
should not be processed any more, even if it is found
again in a priority queue.
Using the same notation introduced with \textsf{UpdateSSSP} for weighted graphs, the following
theorem holds.
\begin{theorem}
\label{thm:complexity2}
The time required by \textsf{UpdateSSSP} to update the distances, the number of shortest paths and the list of predecessors is $O(|\beta |+ ||A ||+d_{max})$.
\end{theorem}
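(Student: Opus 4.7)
My plan is to mirror the structure of the proof of Theorem~\ref{thm:complexity}, splitting the runtime analysis of \textsf{UpdateSSSP} into the initial batch-scan phase (Lines~\ref{lst2:line:start1}--\ref{lst2:line:end1}) and the main traversal phase (Lines~\ref{lst2:line:start2}--\ref{lst2:line:end2}), and then bounding the bookkeeping overhead of the level-indexed bucket structure. The key observation is that, because distances are integer, the priority queue from Algorithm~\ref{algo:t-swsf} is replaced by an array of FIFO queues indexed by level $k \in \{0,\dots,d_{max}\}$; insertion into and extraction from a bucket both cost $O(1)$, removing the $\log$ factor that appeared for the weighted variant.

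For the first phase, I would argue that each of the $|\beta|$ inserted edges triggers $O(1)$ work: comparing the distances of its endpoints, possibly updating one endpoint's tentative distance, and appending it once to the appropriate bucket. This yields an $O(|\beta|)$ contribution. For the main loop, I would invoke the coloring invariant: after its first visit a non-target node is colored gray and is never re-enqueued, and once it is processed it is blackened so it cannot be processed again. Batch-target nodes may be enqueued more than once but, thanks to the black flag, they are still \emph{processed} only once. Hence every node $v \in A$ contributes at most $O(1 + \deg(v))$ work, where the $\deg(v)$ term accounts for scanning its incident edges to recompute predecessors (Lines~\ref{lst2:line:pred1}--\ref{lst2:line:pred2}) and to detect further affected vertices (Lines~\ref{lst2:line:succ1}--\ref{lst2:line:succ2}). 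Summing over $A$ gives exactly $O(\|A\|)$.

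Finally, the $d_{max}$ summand accounts for the sweep through the bucket array: even when a level contains no affected node, the algorithm must advance the level index, and the number of levels is bounded by the maximum distance $d_{max}$ reachable from $s$ (itself at most the vertex diameter). Adding the three contributions yields the claimed $O(|\beta| + \|A\| + d_{max})$ bound.

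The main obstacle I expect is to argue rigorously that the color mechanism does not cause any affected node to be scanned more than a constant number of times while still guaranteeing that all nodes whose distance or shortest-path count changes are correctly detected and inserted into the proper bucket. In particular, I would have to verify carefully that batch-target vertices, which may be enqueued at several levels, are nevertheless \emph{extracted and processed} at most once thanks to the black flag, so that the $O(\|A\|)$ charge holds; a close inspection of the branches in Lines~\ref{lst2:line:pred1}--\ref{lst2:line:succ2} of Algorithm~\ref{algo:unweighted} should suffice.
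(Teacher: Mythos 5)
Your proposal matches the paper's own proof essentially step for step: the same decomposition into the batch-scan phase and the main loop, the same reliance on constant-time bucket operations instead of heap operations, and the same use of the gray/black coloring to argue that each affected non-target node is enqueued exactly once while batch-target nodes $v$ may be enqueued up to $n_v+1$ times (with these extra insertions charged to $|\beta|$) yet processed only once. The only cosmetic difference is where the $O(d_{max})$ term is charged --- the paper counts both the initialization of the array of queues and the level sweep, while you mention only the sweep --- but this does not affect the bound, and your argument is correct.
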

\begin{proof}
The complexity of the initialization step (Lines~\ref{lst2:line:start1} -~\ref{lst2:line:end1}) of Algorithm~\ref{algo:unweighted}
is $O(d_{max}+|\beta|)$, as we initialize a vector of empty 
lists of size $d_{max}$ and scan the batch. In the main loop, we scan again the list
of queues of size $d_{max}$ and, for every node $w$ in one of
the queues, we scan all the incident edges of
$w$. Therefore, the complexity of the main loop is the sum of
the number of nodes in each queue plus the number of edges that
have one endpoint in one of the queues. Using the coloring, each affected
node which is not the target of an edge of the batch is inserted
in $Q$ exactly once. The affected target nodes $v$, instead,
can be inserted at most $n_{v}+1$ times, where $n_{v}$ is the number of
edges in $\beta$ whose target is $v$. The reason of the $+1$ is
that $v$ can be inserted in a queue at most once during the main
loop, as then $color[v]$ will be set to gray. The complexity of the
algorithm is therefore $O(|\beta|+\|A\|+d_{max})$.
\end{proof}


\section{Experiments}
\label{sec:experimental}

\subsection{Experimental setup}
\label{sec:exp_setup}
\paragraph{Implementation and settings.} For an experimental comparison, we implemented our two incremental approaches \textsf{IA} and \textsf{IAW}, as well as the static approximation \textsf{RK}, the static exact \textsf{BA}, the dynamic exact algorithms \textsf{GMB}~\cite{DBLP:conf/socialcom/GreenMB12} and \textsf{KDB}~\cite{DBLP:journals/corr/KourtellisMB14} for unweighted graphs and the dynamic exact algorithm \textsf{NPR}~\cite{DBLP:journals/corr/NasrePR13} for weighted graphs. We chose to implement \textsf{NPR} because it is the only algorithm with a lower complexity than recomputing from scratch and no experimental results have been provided before. In addition, we also implemented \textsf{GMB} and \textsf{KDB} because they are the ones with the best speedups on unweighted graphs.
We implemented all algorithms in C++, building on the open-source \textit{NetworKit} framework~\cite{DBLP:journals/corr/StaudtSM14}.
In all experiments we fix $\delta$ to 0.1 while the error bound $\epsilon$ varies.
The machine used has 2 x 8 Intel(R) Xeon(R) E5-2680 cores at 2.7 GHz, of which we use only one, and 256 GB RAM. All computations are sequential to make the comparison to previous work more meaningful.

\paragraph{Data sets.} We use both real-world and synthetic networks. In case of disconnected graphs, we extract the largest connected component first. The properties of real-world networks are summarised in Table~\ref{table:graphs}. We cover a range from relatively small networks, on which we test all the implemented algorithms, to large networks with millions of edges, on which we can execute only the most scalable methods, i.e. \textsf{RK} and our incremental approach. In particular, the introduction of approximation allows us to approach graphs of several orders of magnitude larger than those considered by previous dynamic algorithms~\cite{DBLP:journals/corr/KourtellisMB14}. 
Our test set includes routing/transportation networks in which BC has a straightforward interpretation, as well as social networks in which BC can be understood as a measure of social influence.
Most networks are publicly available from the collection compiled for the 10th DIMACS challenge\footnote{\url{http://www.cc.gatech.edu/dimacs10/downloads.shtml}}~\cite{BaderMSW12dimacs}.
Due to a shortage of actual dynamic network data with only edge dynamics, we simulate dynamics on these real networks by removing a small fraction of random edges (without separating the connected component) and adding them back in batches.
To study the scalability of the methods, we also use synthetic graphs of growing sizes obtained with the Dorogovtsev-Mendes generator, a simple and scalable model for networks with power-law degree distribution~\cite{dorogovtsev2003evolution}. Since \textsf{NPR} and our \textsf{IAW} perform best on weighted graphs, we also generate weighted instances of Dorogovtsev-Mendes networks, by adding random weights to the edges according to a Gaussian distribution with mean 1 and standard deviation 0.1.

\begin{table}
\begin{center}
  \begin{tabular}{ | l | l | r | r | r |}
    \hline

    Graph 					& Type 	& Nodes 		& Edges 		&  Diameter \\ \hline
    \texttt{PGPgiantcompo} 		& social / web of trust 	&10 680		& 24 316		&(24, 24)	\\ 
    \texttt{as-22july06}		 		& power grid 	& 22 963		& 48 436	 	&(11, 12)	\\
     \texttt{caidaRouterLevel}		& internet		& 192 244		  	& 609 066		&(26, 28)		\\
    \texttt{email-Enron}			& social / communication		& 36 692		& 183 831 	&(13, 14)	\\
    \texttt{coAuthorsCiteseer}		& social / coauthorship		& 227 320 	& 814 134		&(33, 36)	\\
    \texttt{coAuthorsDBLP}			& social / coauthorship		& 299 067 	& 977 676		&(24, 26)	\\
    \texttt{citationCiteseer }			& citations		& 268 495 	& 1 156 647	&(36, 38)	\\
    \texttt{belgium.osm}			& street		& 1 441 295	& 1 549 970 	& (1987, 2184)		\\
    \texttt{Texas84.facebook}		& social / friendship		& 	36 371 		& 1 590 655		& (7,7) \\
    \texttt{netherlands.osm}				& street	&  2 216 688	& 2 441 238 	&	(2553, 2808)	\\
    \texttt{coPapersCiteseer} 		& social / coauthorship		& 434 102 	& 16 036 720 	&	(34, 36)	\\
    
     \texttt{coPapersDBLP} 			& social / coauthorship		&540 486	 	& 15 245 729	&	(23, 24)	\\

    \hline
  \end{tabular}
\end{center}
  \caption{Overview of largest components of real graphs used in the experiments.}
  \label{table:graphs}
\end{table}


\subsection{Accuracy}

We consider the accuracy of the approximated centrality scores both in terms of absolute error and, more importantly, the preservation of the ranking order of nodes.
Our incremental algorithm has exactly the same accuracy as \textsf{RK}. 
The authors of~\cite{DBLP:conf/wsdm/RiondatoK14} study the behavior of the algorithm also experimentally, considering the average and maximum estimation error on a small set of real graphs. 
We study the absolute and relative errors (\ie ratio between the approximated and the exact score) on a wider set of graphs. For our experiments we use \texttt{PGPgiantcompo}, \texttt{as-22july06} and Dorogovtsev-Mendes graphs of several sizes. Our results confirm those of~\cite{DBLP:conf/wsdm/RiondatoK14} in the sense that the measured absolute errors are \textit{always} below the guaranteed maximum error $\epsilon$ and the measured average error is orders of magnitude smaller than $\epsilon$. We observe that the relative error is almost constant among the different ranks and increases for high values of $\epsilon$.
We also study the \textit{relative rank error} introduced by Geisberger \etal~\cite{DBLP:conf/alenex/GeisbergerSS08} (\ie $\max \{\rho, 1/\rho\}$, denoting $\rho$ the ratio between the estimated rank and the
true rank), which we consider the most relevant measure of the quality of the approximations. Figure~\ref{fig:rank} shows the results for \texttt{PGPgiantcompo}, a similar trend can be observed on our other test instances as well. On the left, we see the errors for the whole set of nodes (ordered by exact rank) and, on the right, we focus on the top 100 nodes. The plot show that for a small value of $\epsilon$ (0.01), the ranking is very well preserved over all the positions. With higher values of $\epsilon$, the rank error of the nodes with low betweenness increases, as they are harder to approximate. However, the error remains small for the nodes with highest betweenness, the most important ones for many applications.

\begin{figure}[htbp]
\begin{center}
\includegraphics[width = 0.45\textwidth]{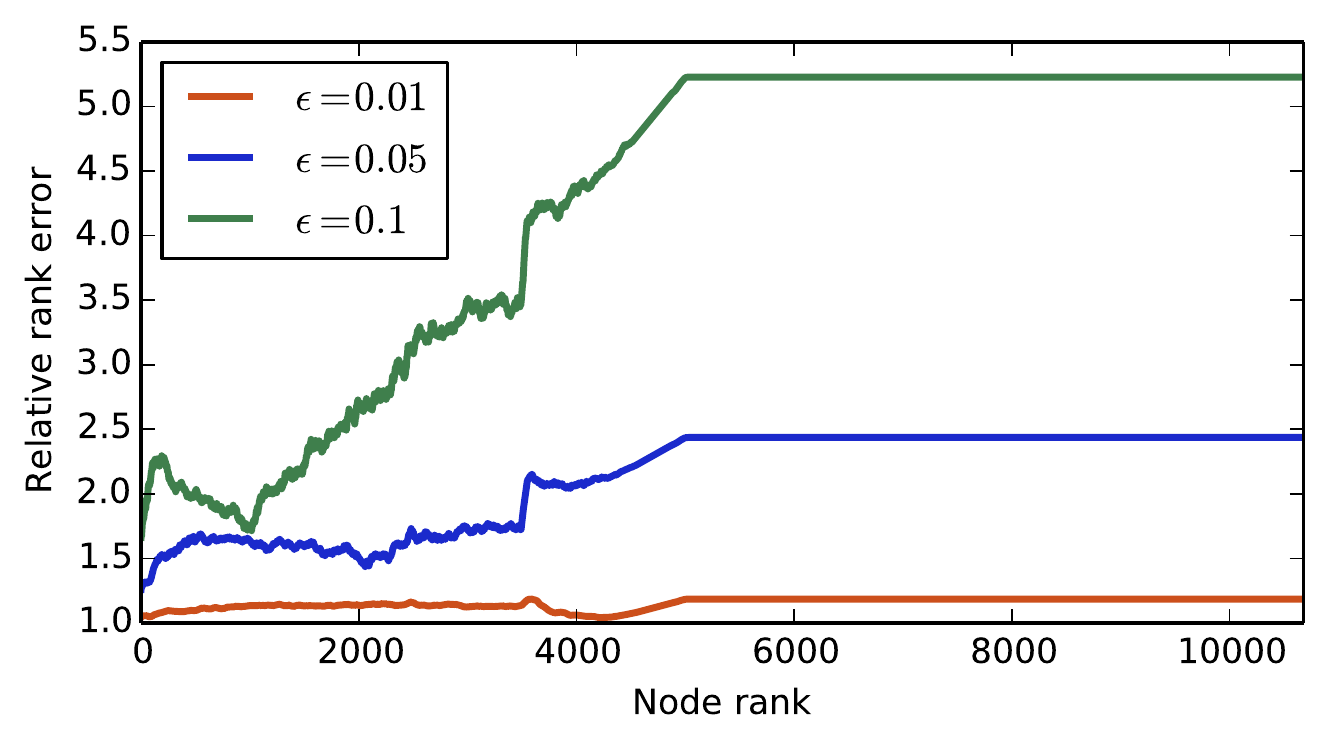}
\includegraphics[width = 0.45\textwidth]{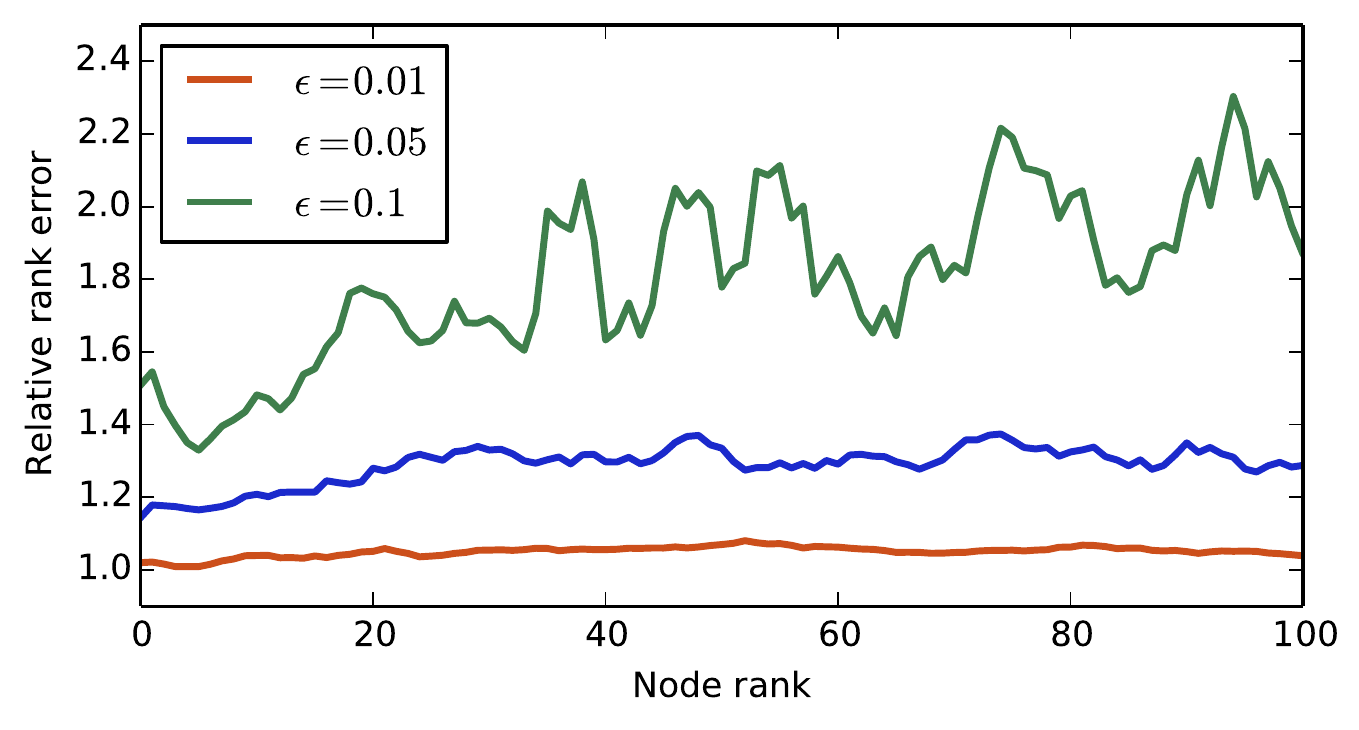}
\caption{Relative rank error on \texttt{PGPgiantcompo} for nodes ordered by rank. On the left, relative errors of all nodes. On the right, relative errors of the 100 nodes with highest betweenness.}
\label{fig:rank}
\end{center}
\end{figure}

\subsection{Running times}
In this section, we discuss the running times of all the algorithms, the speedups of the exact incremental approaches (\textsf{GMB}, \textsf{KDB} and \textsf{NPR}) on \textsf{BA} and the speedups of our algorithm on \textsf{RK}. 
In all of our tests on unweighted graphs, \textsf{KDB} performs worse than \textsf{GMB}, therefore we only report the results of \textsf{GMB}.
 Figure~\ref{pgp} shows the behavior of the four algorithms for unweighted graphs on \texttt{PGPgiantcompo}. Since \textsf{GMB} can only process edges one by one, its running time increases linearly with the batch size, becoming slower than the static algorithm already with a batch size of 64. Our algorithm shows much better speedups and proves to be significantly faster than recomputation even with a batch of size 1024. The reasons for our high speedup are mainly two: First, we process the updates in a batch, processing only once the nodes affected by multiple edge insertions. Secondly, our algorithm does not need to recompute the dependencies, in contrast to all dynamic algorithms based on \textsf{BA} (\ie all existing dynamic exact algorithms). For each SSSP, the dependencies need to be recomputed not only for nodes whose distance or number of shortest paths from the source has changed after the edge insertion(s), but also for all the intermediate nodes in the old shortest paths, even if their distance and number of shortest paths are not affected. This number is significantly higher, since for every node which changes its distance or increases its number of shortest paths, the dependencies of \textit{all} the nodes in \textit{all} the old shortest paths are affected.

Results on synthetic unweighted graphs are analogous to those shown in Figure~\ref{pgp} and can be found in the Appendix (Figure~\ref{unweighted} in Section~\ref{app:experiments}). We did the same comparison on synthetic weighted graphs (described in Section~\ref{sec:exp_setup}), using \textsf{NPR} instead of \textsf{GMB}. We observed that, even though \textsf{NPR} is theoretically faster than recomputing from scratch, its speedups are very small (see Figure~\ref{weighted} in Section~\ref{app:experiments}). Figure~\ref{speedups_dyn} in Section~\ref{app:experiments} shows the speedups of \textsf{IAW} on \textsf{NPR} and \textsf{IA} on \textsf{GMB}. Both \textsf{NPR} and \textsf{GMB} have very high memory requirements ($O(n^2+mn)$), which makes the algorithms unusable on networks with more than a few thousand edges. The memory requirement is the same also for all other existing dynamic algorithms, with the exception of \textsf{KDB}, which requires $O(n^2)$ memory, impractical for large networks. 
Since the exact algorithms are not scalable, for the comparison on larger networks we used only \textsf{RK} and our algorithms. Figure~\ref{speedups} (left) shows the speedups of our algorithm \textsf{IA} on \textsf{coPapersCiteseer}. In this case, the average running time for \textsf{RK} is always approximatively 420 seconds, while for \textsf{IA} it ranges from 0.02 seconds (for a single edge update) to 6.2 seconds (for a batch of 1024 edges). Figure~\ref{speedups} (right) summarises the speedups obtained in the networks of Table~\ref{table:graphs}. The red dots represent the average speedup among the networks for a particular batch size and the shaded areas represent the interval where the different speedups lie. The average speedup is between  $10^3$ and $10^4$ for a single-edge update and between $10$ and $10^2$ for a batch size of 1024. Even for graphs with the worst speedups (basically the smallest ones), a batch of this dimension can always be processed faster than recomputation. 

\begin{table}
\begin{center}
  \begin{tabular}{ | l | r | r | r | r | r | r |}
    \hline

    Graph 					& Time \textsf{RK} [s] 	&  Time \textsf{IA} [s] ($b = 1$) 		& Time \textsf{IA} [s] ($b = 1024$)  & Speedup ($b = 1$) & Speedup ($b = 1024$)\\ \hline
    \texttt{PGPgiantcompo} 		& 3.894	& 0.009	&  1.482	& 432.6 & 2.6	\\ 
    \texttt{as-22july06}		 		& 3.437	&  0.020	&  0.520	& 171.8 & 6.6\\
     \texttt{caidaRouterLevel}		& 64.781	& 0.023	& 2.193 & 2816.5& 29.5	\\
    \texttt{email-Enron}			& 82.568	& 0.016	& 0.750 & 5160.5 & 110.0	\\
    \texttt{coAuthorsCiteseer}		& 65.474 	&  0.009	& 1.008 & 7274.8& 64.9	\\
    \texttt{coAuthorsDBLP}			& 109.939 & 0.009	& 1.201 & 12215.4 & 91.5	\\
    \texttt{citationCiteseer }			& 192.609	& 0.030	& 4.229 & 6420.3& 45.5	\\
    \texttt{belgium.osm}			&  666.144&  1.049	&  48.370 & 635.0 & 13.7	\\
    \texttt{Texas84.facebook}		& 344.075 & 0.147	& 21.829  & 2340.6 & 15.7	 \\
    \texttt{netherlands.osm}			& 878.148	&  1.292	&  31.802 & 679.6 & 27.6   \\
    \texttt{coPapersCiteseer} 		& 427.142	& 0.025	& 6.186 & 17085.6 & 69.0	\\    
     \texttt{coPapersDBLP} 			& 497.617	& 0.009	& 3.027 & 55290.7 &	164.3
    \\
    \hline
  \end{tabular}
\end{center}
  \caption{Average running times and speedups on real graphs for batch size $b$. All the algorithms were run with $\epsilon=0.05$ and $\delta=0.1$ and averaged over 10 runs.}
  \label{table:times}
\end{table}


\begin{figure}[htbp]
\begin{center}
\includegraphics[width = 0.45\textwidth]{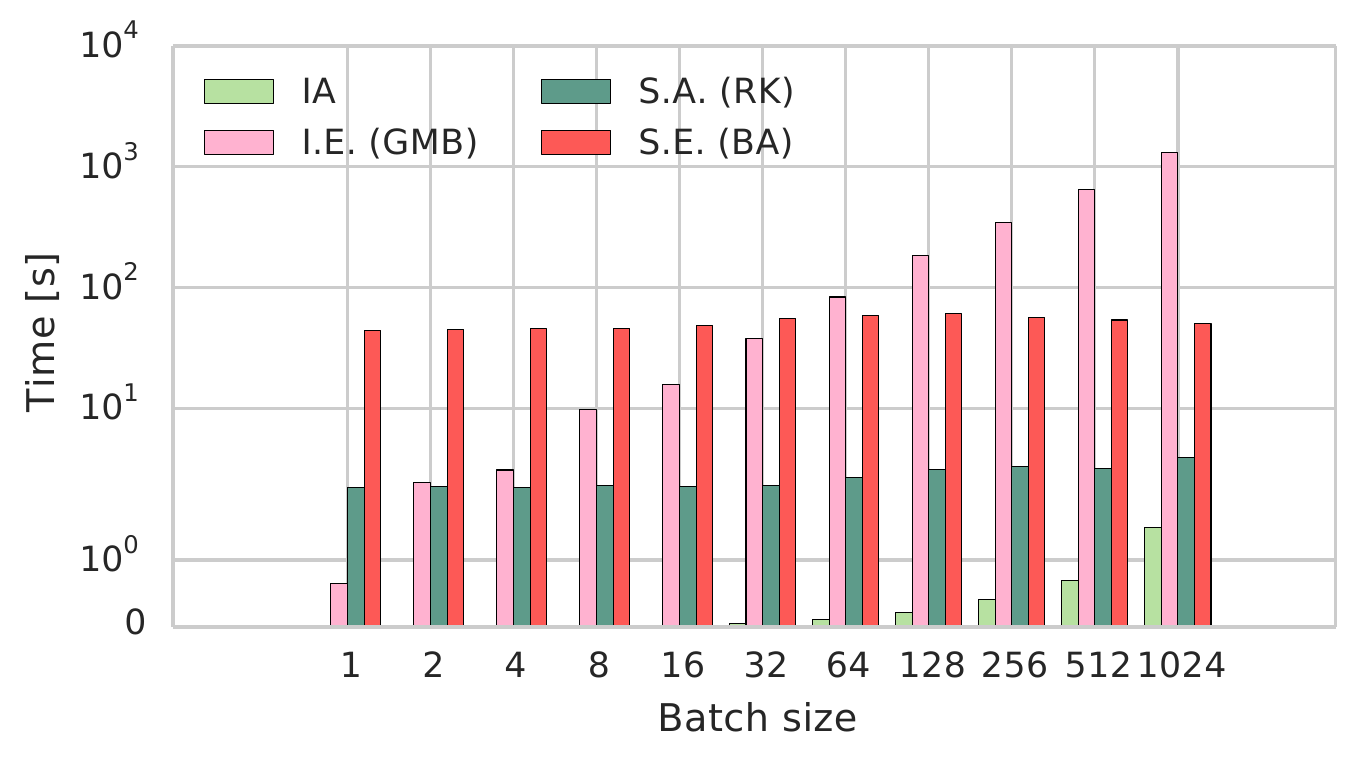}
\includegraphics[width = 0.45\textwidth]{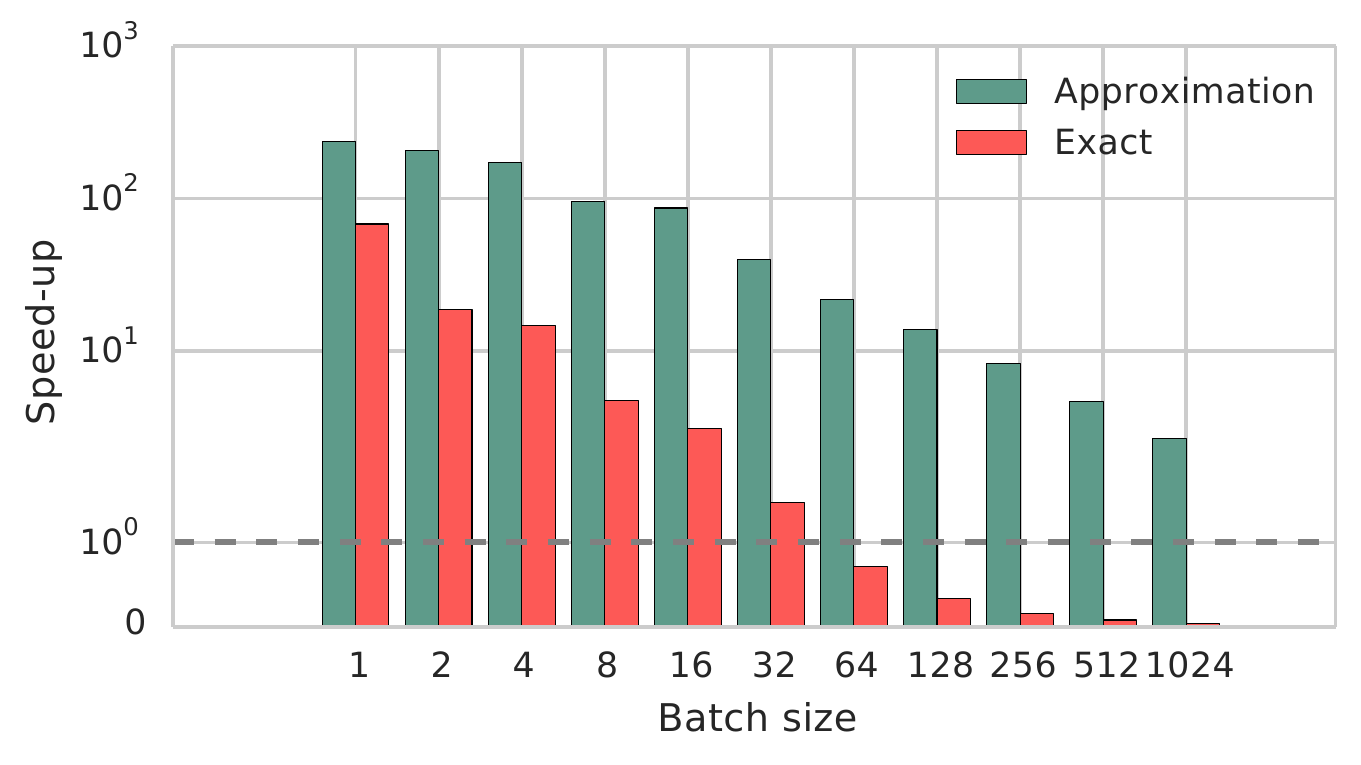}
\caption{Running times and speedups on \texttt{PGPgiantcompo}, with $\epsilon = 0.05$ and with batches of different sizes. On the left, running times of the four algorithms: static exact (\textsf{BA}), static approximation (\textsf{RK}), incremental exact (\textsf{GMB}) and our incremental approximation \textsf{IA}. On the right, comparison of the speedups of \textsf{GMB} on \textsf{BA} and of \textsf{IA} on \textsf{RK}.} 
\label{pgp}
\end{center}
\end{figure}

\begin{figure}[htbp]
\begin{center}
\includegraphics[width = 0.45\textwidth]{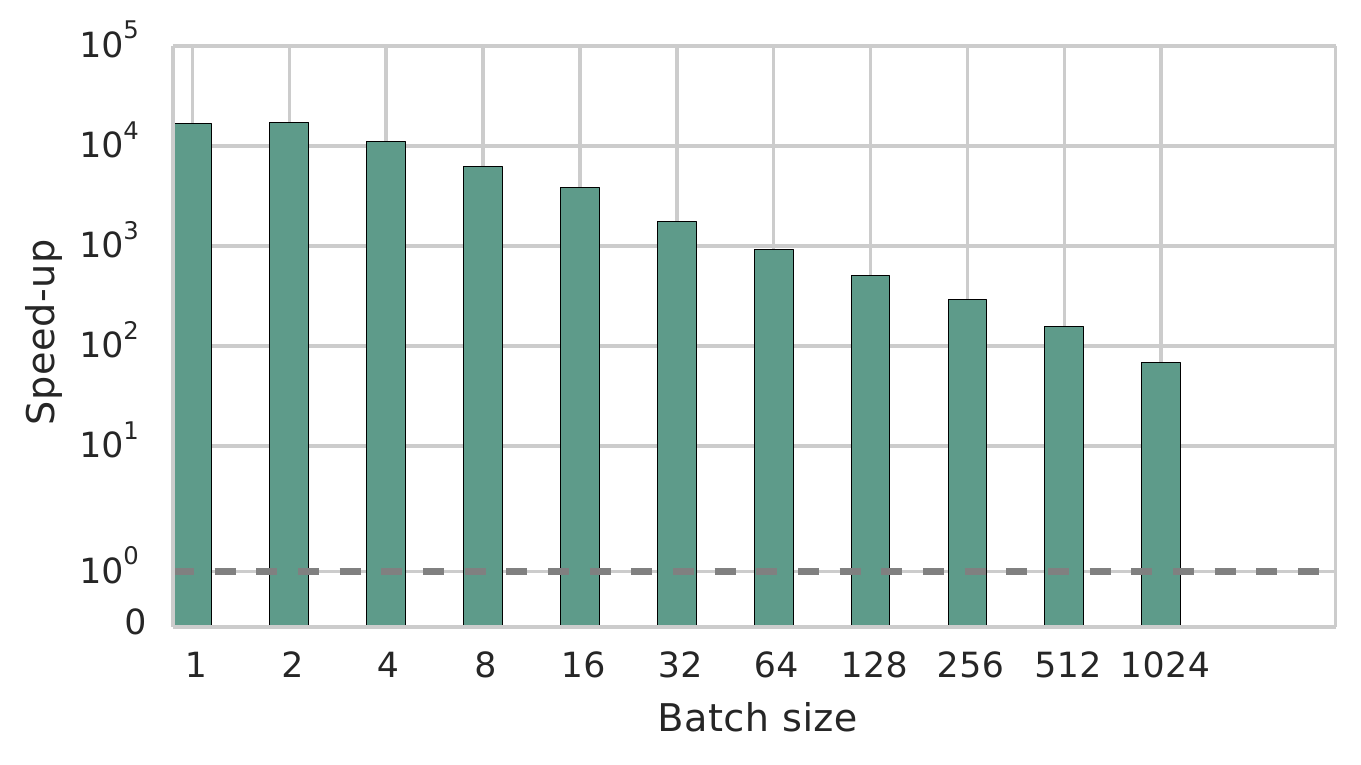}
\includegraphics[width = 0.45\textwidth]{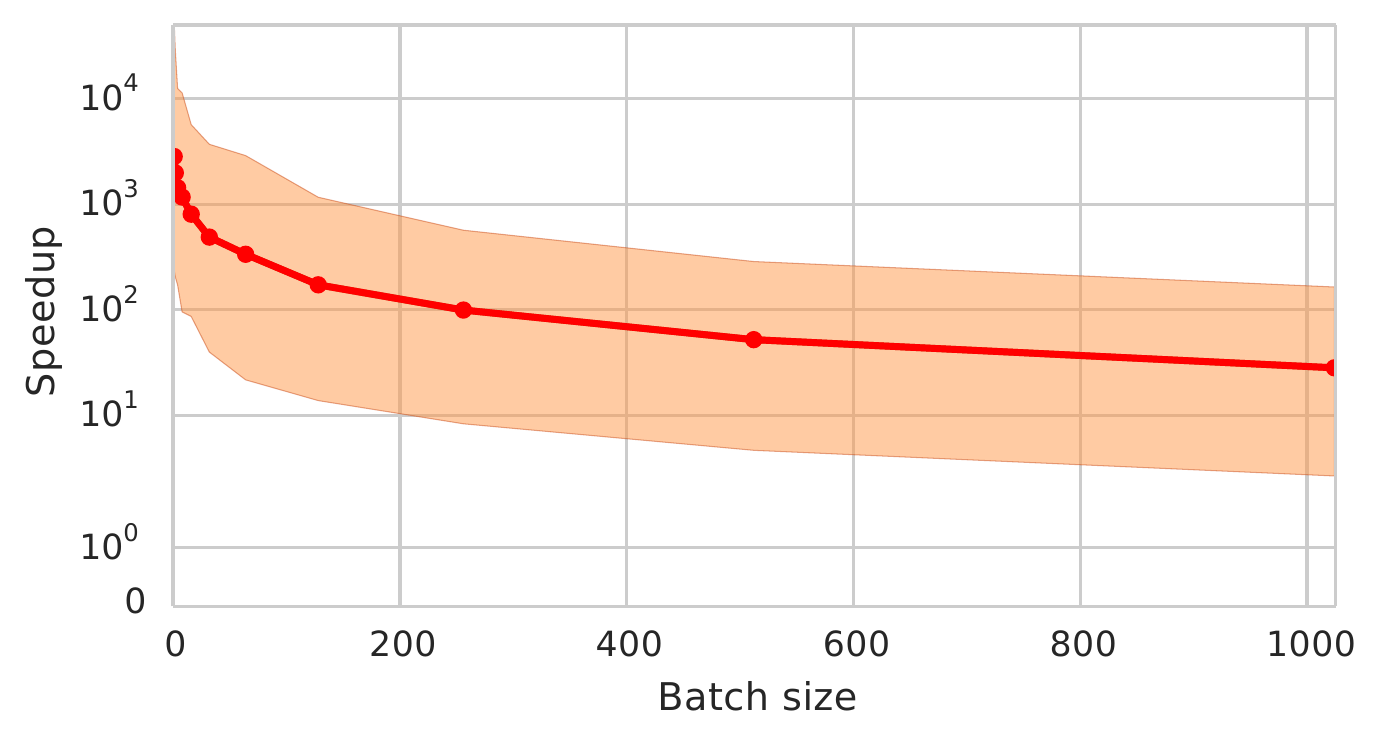}
\caption{On the left, speedups of \textsf{IA} on \textsf{RK} on the \textsf{coPapersCiteseer} network, run with $\epsilon = 0.05$. On the right, geometric mean of speedups over all real networks.}
\label{speedups}
\end{center}
\end{figure}


\section{Conclusions}
Because betweenness centrality considers all shortest paths between pairs of nodes, its exact computation is out of reach for the large complex networks that come up in many applications today. 
However, approximate scores obtained by sampling paths are often sufficient to identify the most important nodes and rank nodes in an order that is very similar to exact BC. 
Since many applications are concerned with rapidly evolving networks, we have explored whether a dynamic approach -- which updates paths when a batch of new edges arrives -- is more efficient than recomputing BC scores from scratch.
Our conclusions are drawn from experiments on a diverse set of real-world networks with simulated dynamics.
We introduce two dynamic betweenness approximation algorithms, for weighted and unweighted graphs respectively.
Their common method is based on the static \textsf{RK} approximation algorithm and inherits its provable bound on the maximum error.
Through reimplementation and experiments, we show that previously proposed dynamic BC algorithms scale badly, especially due to an $O(n^2)$ memory footprint.
Combining a dynamic approach with approximation, our method is more scalable and is the first dynamic one applicable to networks with millions of edges (without using external memory).

As an intermediate result of independent interest, we also propose an asymptotically faster algorithm for updating a solution to the SSSP problem in unweighted graphs.
The dynamic updating of paths implies a higher memory footprint, but also enables significant speedups compared to recomputation (\eg factor 100 for a batch of 1024 edge insertions).
The scalability of our algorithms is primarily limited by the available memory. Each sampled path requires $O(n)$ memory and the number of required samples grows quadratically as the error bound $\epsilon$ is tightened. 
This leaves the user with a tradeoff between running time and memory consumption on the one hand and BC score error on the other hand.
However, our experiments indicate that even a relatively high error bound (\eg $\epsilon = 0.1$) for the BC scores preserves the ranking for the more important nodes well.

We studied sequential implementations for simplicity and comparability with related work, but parallelization is possible and can yield further speedups in practice.
Our implementations are based on \textit{NetworKit}\footnotemark, the open-source framework for high-performance network analysis, and we plan to publish our source code in upcoming releases of the package.

\footnotetext{\url{http://networkit.iti.kit.edu}}

\begin{footnotesize}
\paragraph*{Acknowledgements.}
This work is partially supported by German Research Foundation (DFG) grant FINCA (\emph{Fast Inexact Combinatorial and Algebraic Solvers for Massive Networks}) within the Priority Programme 1736 \emph{Algorithms for Big Data} and project \emph{Parallel Analysis of Dynamic Networks -- Algorithm Engineering of Efficient Combinatorial and Numerical Methods}, which is funded by the Ministry of Science, Research and the Arts Baden-W\"{u}rttemberg.
We thank Pratistha Bhattarai (Smith College) for help with the experimental evaluation. We also thank numerous contributors to the \textit{NetworKit} project.
\end{footnotesize}

\newpage

\bibliographystyle{plain}
\bibliography{references}

\begin{thebibliography}{10}

\bibitem{DBLP:conf/waw/BaderKMM07}
David~A. Bader, Shiva Kintali, Kamesh Madduri, and Milena Mihail.
\newblock Approximating betweenness centrality.
\newblock In Anthony Bonato and Fan R.~K. Chung, editors, {\em Workshop on
  Algorithms and Models for the Web Graph}, volume 4863 of {\em Lecture Notes
  in Computer Science}, pages 124--137. Springer, 2007.

\bibitem{BaderMSW12dimacs}
David~A. Bader, Henning Meyerhenke, Peter Sanders, and Dorothea Wagner,
  editors.
\newblock {\em Graph Partitioning and Graph Clustering -- 10th DIMACS Impl.
  Challenge}, volume 588 of {\em Contemporary Mathematics}.
\newblock AMS, 2013.

\bibitem{DBLP:conf/wea/BauerW09}
Reinhard Bauer and Dorothea Wagner.
\newblock Batch dynamic single-source shortest-path algorithms: An experimental
  study.
\newblock In {\em 8th International Symposium on Experimental Algorithms (SEA
  '09)}, volume 5526 of {\em LNCS}, pages 51--62. Springer, 2009.

\bibitem{Brandes01betweennessCentrality}
Ulrik Brandes.
\newblock A faster algorithm for betweenness centrality.
\newblock {\em Journal of Mathematical Sociology}, 25:163--177, 2001.

\bibitem{DBLP:journals/ijbc/BrandesP07}
Ulrik Brandes and Christian Pich.
\newblock Centrality estimation in large networks.
\newblock {\em I. J. Bifurcation and Chaos}, 17(7):2303--2318, 2007.

\bibitem{conf/sirocco/DAndreaDFLP13}
Annalisa D'Andrea, Mattia D'Emidio, Daniele Frigioni, Stefano Leucci, and Guido
  Proietti.
\newblock Dynamically maintaining shortest path trees under batches of updates.
\newblock In Thomas Moscibroda and Adele~A. Rescigno, editors, {\em 20th
  International Colloquium on Structural Information and Communication
  Complexity (SIROCCO '13)}, volume 8179 of {\em Lecture Notes in Computer
  Science}, pages 286--297. Springer, 2013.

\bibitem{DBLP:conf/wea/DAndreaDFLP14}
Annalisa D'Andrea, Mattia D'Emidio, Daniele Frigioni, Stefano Leucci, and Guido
  Proietti.
\newblock Experimental evaluation of dynamic shortest path tree algorithms on
  homogeneous batches.
\newblock In {\em 13th International Symposium on Experimental Algorithms (SEA
  '14)}, volume 8504 of {\em LNCS}, pages 283--294. Springer, 2014.

\bibitem{dorogovtsev2003evolution}
Sergei~N Dorogovtsev and Jos{\'e}~FF Mendes.
\newblock {\em Evolution of networks: From biological nets to the Internet and
  WWW}.
\newblock Oxford University Press, 2003.

\bibitem{citeulike:1025135}
Linton~C. Freeman.
\newblock A set of measures of centrality based on betweenness.
\newblock {\em Sociometry}, 40(1):35--41, March 1977.

\bibitem{Frigioni97experimentalanalysis}
Daniele Frigioni, Mario Ioffreda, Umberto Nanni, and Giulio Pasqualone.
\newblock Experimental analysis of dynamic algorithms for the single source
  shortest path problem.
\newblock {\em ACM Jounal of Experimental Algorithmics}, 3:5, 1997.

\bibitem{Frigioni_fullydynamic}
Daniele Frigioni, Alberto Marchetti-spaccamela, and Umberto Nanni.
\newblock Fully dynamic output bounded single source shortest path problem
  (extended abstract).
\newblock In {\em 7th ACM-SIAM Symposium on Discrete Algorithms (SODA '96)},
  pages 212--221, 1996.

\bibitem{Frigioni_semi-dynamicalgorithms}
Daniele Frigioni, Alberto Marchetti-spaccamela, and Umberto Nanni.
\newblock Semi-dynamic algorithms for maintaining single-source shortest path
  trees.
\newblock {\em Algorithmica}, 22:250--274, 2008.

\bibitem{DBLP:conf/alenex/GeisbergerSS08}
Robert Geisberger, Peter Sanders, and Dominik Schultes.
\newblock Better approximation of betweenness centrality.
\newblock In J.~Ian Munro and Dorothea Wagner, editors, {\em 10th Workshop on
  Algorithm Engineering and Experiments ({ALENEX} '08)}, pages 90--100. SIAM,
  2008.

\bibitem{DBLP:conf/socialcom/GreenMB12}
Oded Green, Robert McColl, and David~A. Bader.
\newblock A fast algorithm for streaming betweenness centrality.
\newblock In {\em SocialCom/PASSAT}, pages 11--20. IEEE, 2012.

\bibitem{Lee12qube:a}
Min joong Lee, Ryan~H. Choi, Jungmin Lee, Chin wan Chung, and Jaimie~Y. Park.
\newblock Qube: a quick algorithm for updating betweenness centrality, 2012.

\bibitem{DBLP:conf/asunam/KasWCC13}
Miray Kas, Matthew Wachs, Kathleen~M. Carley, and L.~Richard Carley.
\newblock Incremental algorithm for updating betweenness centrality in
  dynamically growing networks.
\newblock In {\em Advances in Social Networks Analysis and Mining 2013
  ({ASONAM} '13)}, pages 33--40. ACM, 2013.

\bibitem{DBLP:journals/corr/KourtellisMB14}
Nicolas Kourtellis, Gianmarco De~Francisci Morales, and Francesco Bonchi.
\newblock Scalable online betweenness centrality in evolving graphs.
\newblock {\em CoRR}, abs/1401.6981, 2014.

\bibitem{DBLP:journals/corr/NasrePR13}
Meghana Nasre, Matteo Pontecorvi, and Vijaya Ramachandran.
\newblock Betweenness centrality - incremental and faster.
\newblock {\em CoRR}, abs/1311.2147, 2013.

\bibitem{Ramalingam92anincremental}
G.~Ramalingam and Thomas Reps.
\newblock An incremental algorithm for a generalization of the shortest-path
  problem.
\newblock {\em Journal of Algorithms}, 21:267--305, 1992.

\bibitem{Ramalingam96onthe}
G.~Ramalingam and Thomas Reps.
\newblock On the computational complexity of dynamic graph problems.
\newblock {\em THEORETICAL COMPUTER SCIENCE}, 158:233--277, 1996.

\bibitem{DBLP:conf/wsdm/RiondatoK14}
Matteo Riondato and Evgenios~M. Kornaropoulos.
\newblock Fast approximation of betweenness centrality through sampling.
\newblock In {\em 7th ACM International Conference on Web Search and Data
  Mining (WSDM '14)}, pages 413--422. ACM, 2014.

\bibitem{DBLP:journals/algorithmica/RodittyZ11}
Liam Roditty and Uri Zwick.
\newblock On dynamic shortest paths problems.
\newblock {\em Algorithmica}, 61(2):389--401, 2011.

\bibitem{DBLP:journals/corr/StaudtSM14}
Christian Staudt, Aleksejs Sazonovs, and Henning Meyerhenke.
\newblock Networkit: An interactive tool suite for high-performance network
  analysis.
\newblock \url{http://arxiv.org/abs/1403.3005}, 2014.

\end{thebibliography}

\newpage
\appendix
\section{Proof of Lemma~\ref{lem:2}}
\label{app:proof_lemma}
\begin{proof}
To see that $p'_{st}$ is a shortest path of $G'$, it is sufficient
to notice that, if $d'_{s}(t)=d_{s}(t)$ and $\sigma'_{s}(t)=\sigma_{s}(t)$,
then all the shortest paths between $s$ and $t$ in $G$ are shortest
paths also in $G'$.

Let $p'_{xy}$ be a shortest path of $G'$ between nodes $x$ and
$y$. Basically, there are two possibilities for $p'_{xy}$ to be the $k$-th sample. Naming $e_{1}$ the event $\{\mathcal{S}_{xy}=\mathcal{S}'_{xy}\}$
(the set of shortest paths between $x$ and $y$ does not change after
the edge insertion) and $e_{2}$ the complementary event of $e_{1}$,
we can write $\Pr(p'_{(k)}=p'_{xy})$ as $\Pr(p'_{(k)}=p'_{xy}\cap e_{1})+\Pr(p'_{(k)}=p'_{xy}\cap e_{2})$.

Using conditional probability, the first addend can be rewritten as
$\Pr(p'_{(k)}=p'_{xy}\cap e_{1})=\Pr(p'_{(k)}=p'_{xy}|e_{1})\Pr(e_{1})$.
As the procedure $\mathcal{P}$ keeps the old shortest path when $e_{1}$
occurs, then $\Pr(p'_{(k)}=p'_{xy}|e_{1})=\Pr(p_{(k)}=p'_{xy}|e_{1})=\frac{1}{n(n-1)}\frac{1}{\sigma_{x}(y)}$,
which is also equal to $\frac{1}{n(n-1)}\frac{1}{\sigma'_{x}(y)}$,
since $\sigma_{x}(y)=\sigma'_{x}(y)$ when we condition on $e_{1}$.
Therefore, $\Pr(p'_{(k)}=p'_{xy}\cap e_{1})=\frac{1}{n(n-1)}\frac{1}{\sigma'_{x}(y)}\cdot\Pr(e_{1})$.

Analogously, $\Pr(p'_{(k)}=p'_{xy}\cap e_{2})=\Pr(p'_{(k)}=p'_{xy}|e_{2})\Pr(e_{2})$.
In this case, $\Pr(p'_{(k)}=p'_{xy}|e_{2})=\frac{1}{n(n-1)}\cdot\frac{1}{\sigma'_{x}(y)}$,
since this is the probability of the node pair $(x,y)$ to be the $k$-th sample in the initial sampling and of $p'_{xy}$ to be selected
among other paths in $\mathcal{S}'_{xy}$. Then, $\Pr(p'_{(k)}=p'_{xy}\cap e_{2})=\frac{1}{n(n-1)}\cdot\frac{1}{\sigma'_{x}(y)}\cdot\Pr(e_{2})=\frac{1}{n(n-1)}\cdot\frac{1}{\sigma'_{x}(y)}\cdot(1-\Pr(e_{1}))$.
The probability $\Pr(p'_{(k)}=p'_{xy})$ can therefore be
rewritten as 
\[ 
\Pr(p'_{(k)}=p'_{xy}) =\mbox{\ensuremath{\frac{1}{n(n-1)}\frac{1}{\sigma'_{x}(y)}\cdot\Pr}(\ensuremath{e_{1}})+\ensuremath{\frac{1}{n(n-1)}\frac{1}{\sigma'_{x}(y)}\cdot}(1-\ensuremath{\Pr}(\ensuremath{e_{1}}))}=\frac{1}{n(n-1)}\frac{1}{\sigma'_{x}(y)}.
\]
\qed 
\end{proof}

\section{Pseudocode of \textsf{RK}, \textsf{UpdateSSSP} and \textsf{UpdateSSSPW}.}
\label{app:pseudocodes}
\begin{algorithm}[h]
 \begin{small}
\LinesNumbered
\SetKwData{B}{$\tilde{c}_B$}\SetKwData{VD}{VD}
\SetKwFunction{getVertexDiameter}{getVertexDiameter}
\SetKwFunction{sampleUniformNodePair}{sampleUniformNodePair}
\SetKwFunction{computeExtendedSSSP}{computeExtendedSSSP}
\SetKwInOut{Input}{Input}\SetKwInOut{Output}{Output}
\Input{Graph $G=(V,E),\epsilon,\delta \in (0,1)$}
\Output{A set of approximations of the betweenness centrality of the nodes in $V$}
\ForEach{node $v \in V$}
{
	\B$(v)\leftarrow 0$\;
}
\VD($G$)$\leftarrow$\getVertexDiameter{$G$}\;
$r \leftarrow (c/\epsilon^2) (\lfloor \log_2(\VD(G)-2)\rfloor +\ln(1/\delta))$\;
\For{$i \leftarrow 1$ \KwTo $r$}{
	$(s_i,t_i)\leftarrow$ \sampleUniformNodePair{$V$}\;
	\mbox{$(d_{s_i},\sigma_{s_i},P_{s_i})\leftarrow$ \computeExtendedSSSP{$G,s_i$}}\;
	\tcp{Now one path from $s_i$ to $t_i$ is sampled uniformly at random}
	$v \leftarrow t_i$\;
	$p_{(i)}\leftarrow$ empty list\;
	\While{$P_{s_i}(v) \neq \{s_i\}$}
	{
		\mbox{sample $z \in P_{s_i}(v)$ with $\Pr=\sigma_{s_i}(z)/\sigma_{s_i}(v)$}\;
		$\B(z)\leftarrow \B(z)+1/r$\;
		add $z\rightarrow p_{(i)}$;
		$v\leftarrow z$\;
	}
}
\Return{$\{(v,\B(v)),\: v\in V\}$}

\caption{Computation of the approximated betweenness centralities on the initial graph $G$ (\textsf{RK} algorithm)} \label{algo1}
\end{small}
\end{algorithm}

\begin{algorithm}[h]
\begin{small}
\LinesNumbered
\SetKwData{B}{$\tilde{c}_B$}
\SetKwFunction{extractMin}{extractMin}
\SetKwInOut{Input}{Input}\SetKwInOut{Output}{Output}
\Input{Graph $G=(V,E)$, source node $s$, old values of $d_s$, $\sigma_s$ and $P_s$, batch of edge insertion $\beta={e_1,...,e_k}$}
\Output{Updated values of $d_s$, $\sigma_s$ and $P_s$}
$Q \leftarrow$ empty min-based priority queue\;
\ForEach{$e=\{u,v\} \in \beta$}
{\label{lst:line:start1}
	insert $e$ into $E$\;
	\If{$d_s(v)\geq d_s(u)+\mathsf{w}(e)$}
	{
		insert $v$ in $Q$ with $p_Q(v)=d_s(u)+\mathsf{w}(e)$ or, if already in $Q$ and $p_Q(v)>d_s(u)+\mathsf{w}(e)$, set $p_Q(v)\leftarrow d_s(u)+\mathsf{w}(e)$;
	}
	\If{$d_s(u)\geq d_s(v)+\mathsf{w}(e)$}
	{
		insert $u$ in $Q$ with $p_Q(u)=d_s(v)+\mathsf{w}(e)$ or, if already in $Q$ and $p_Q(u)>d_s(v)+\mathsf{w}(e)$, set $p_Q(u)\leftarrow d_s(v)+\mathsf{w}(e)$;
	}
}\label{lst:line:end1}
\While{$Q\neq \emptyset$}
{\label{lst:line:main1}
	$(w,p) \leftarrow$ \extractMin($Q$)\;
	$d_s(w)\leftarrow p$; $P_s(w)\leftarrow \emptyset$; $\sigma_s(w)\leftarrow 0$\;
	\ForEach{incident edge $\{z,w\}$}
	{\label{lst:line:start2}
		\If{$d_s(w)=d_s(z)+\mathsf{w}(\{z,w\})$}
		{
			add $z$ to $P_s(w)$\;
			$\sigma_s(w)\leftarrow \sigma_s(w)+\sigma_s(z)$\;
		}
		\If{$d_s(z)\geq d_s(w)+\mathsf{w}(\{w,z\})$}
			{
				insert $z$ in $Q$ with $p_Q(z)=d_s(w)+\mathsf{w}(\{z,w\})$ or, if already in $Q$ and $p_Q(z)>d_s(w)+\mathsf{w}(\{z,w\})$, set $p_Q(z)\leftarrow d_s(w)+\mathsf{w}(\{z,w\})$;
			}

	}\label{lst:line:end2}
}\label{lst:line:main2}
\Return{$(d_s,\sigma_s,P_s)$}
\caption{\textsf{UpdateSSSPW}}\label{algo:t-swsf}
\end{small}
\end{algorithm}

\begin{algorithm}[h] \begin{small}
\LinesNumbered
\SetKwData{visited}{visited}
\SetKwFunction{extractMin}{extractMin}
\SetKwInOut{Assume}{Assume}
\SetKwInOut{Input}{Input}\SetKwInOut{Output}{Output}
\Input{Graph $G=(V,E)$, source node $s$, old values of $d_s$, $\sigma_s$ and $P_s$, batch of edge insertion $\beta={e_1,...,e_k}$, maximum distance $d_{max}$}
\Output{Updated values of $d_s$, $\sigma_s$ and $P_s$}
\Assume{$color[v]=white$ for all $v\in V$}
$Q[]\leftarrow$ array of empty queues of size $d_{max}$\;
\ForEach{$e=\{u,v\} \in \beta$}
{\label{lst2:line:start1}
	insert $e$ into $E$\;
	\If{$d_s(u)<d_s(v)$}
	{
	$k \leftarrow d_s(u)+1$; enqueue $v \rightarrow Q[k]$\;
	}
	\If{$d_s(v)<d_s(u)$}
	{
	$k \leftarrow d_s(v)+1$; enqueue $u \rightarrow Q[k]$\;
	}
	
}\label{lst2:line:end1}
$k \leftarrow 1$\;
\While{$k<d_{max}$}
{\label{lst2:line:start2}
	\While{$Q[k]\neq \emptyset$}
	{
		dequeue $w\leftarrow Q[k]$\;
		\If{color$(w)=$black}
		{
			continue\;
		}
		$color(w)\leftarrow black$; $d_s(w)\leftarrow k$; $P_s(w)\leftarrow \emptyset$; $\sigma_s(w)\leftarrow 0$\;
		\ForEach{incident edge $(z,w)$}
		{
			\If{$d_s(w)=d_s(z)+1$}
			{\label{lst2:line:pred1}
				add $z$ to $P_s(w)$; $\sigma_s(w)\leftarrow \sigma_s(w)+\sigma_s(z)$\;
			}\label{lst2:line:pred2}
			\If{color$[z]=$white \textbf{and} $d_s(z)\geq d_s(w)+1$}
			{\label{lst2:line:succ1}
				$color[z]\leftarrow gray$; enqueue $z \rightarrow Q[k+1]$\;
			}\label{lst2:line:succ2}
		}

	}
	$k\leftarrow k+1$\;
}\label{lst2:line:end2}
Set to white all the nodes that have been in $Q$\;
\Return{$(d_s,\sigma_s,P_s)$}
\caption{\textsf{UpdateSSSP}}\label{algo:unweighted}
\end{small} \end{algorithm}

\clearpage
\newpage

\section{Additional experimental results}
\label{app:experiments}


\begin{figure}[h!]

\begin{subfigure}[b]{\textwidth}
\begin{center}
\includegraphics[width = 0.45\textwidth]{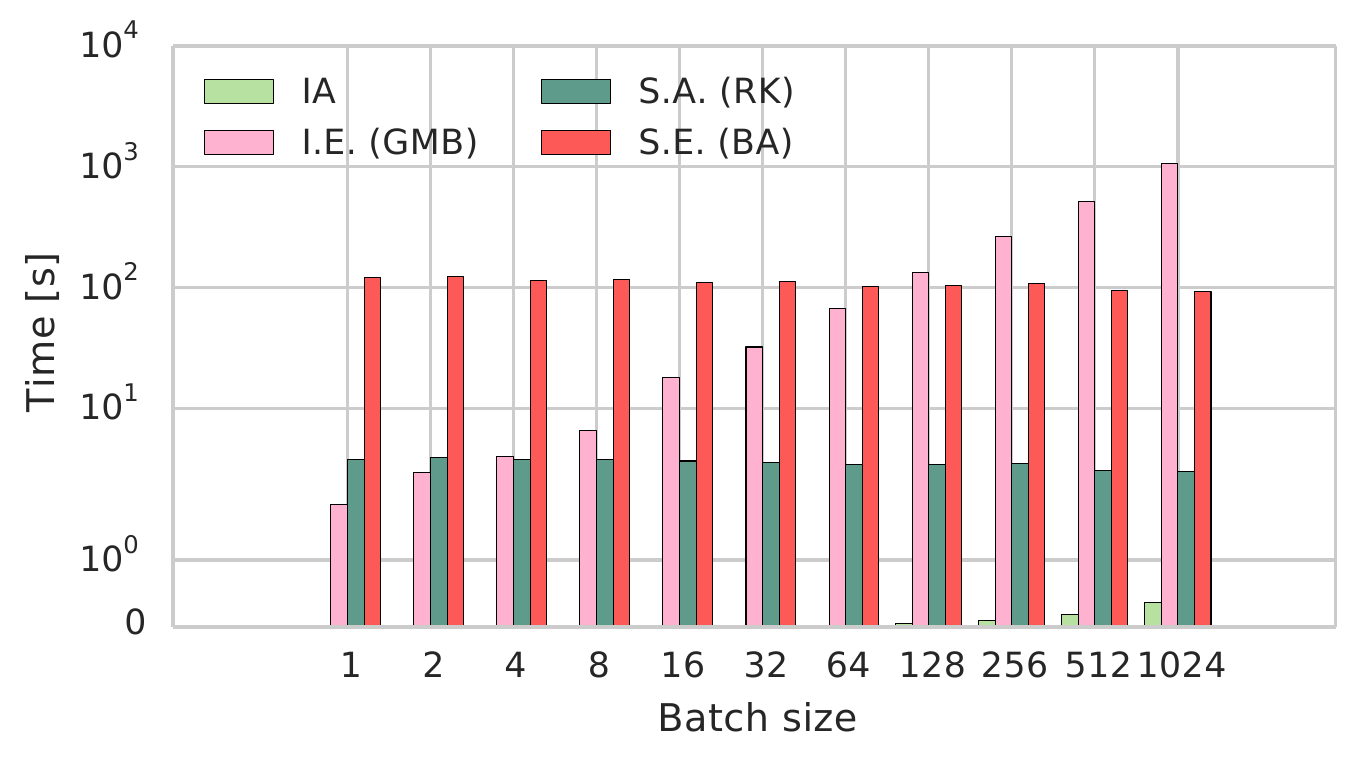}
\includegraphics[width = 0.45\textwidth]{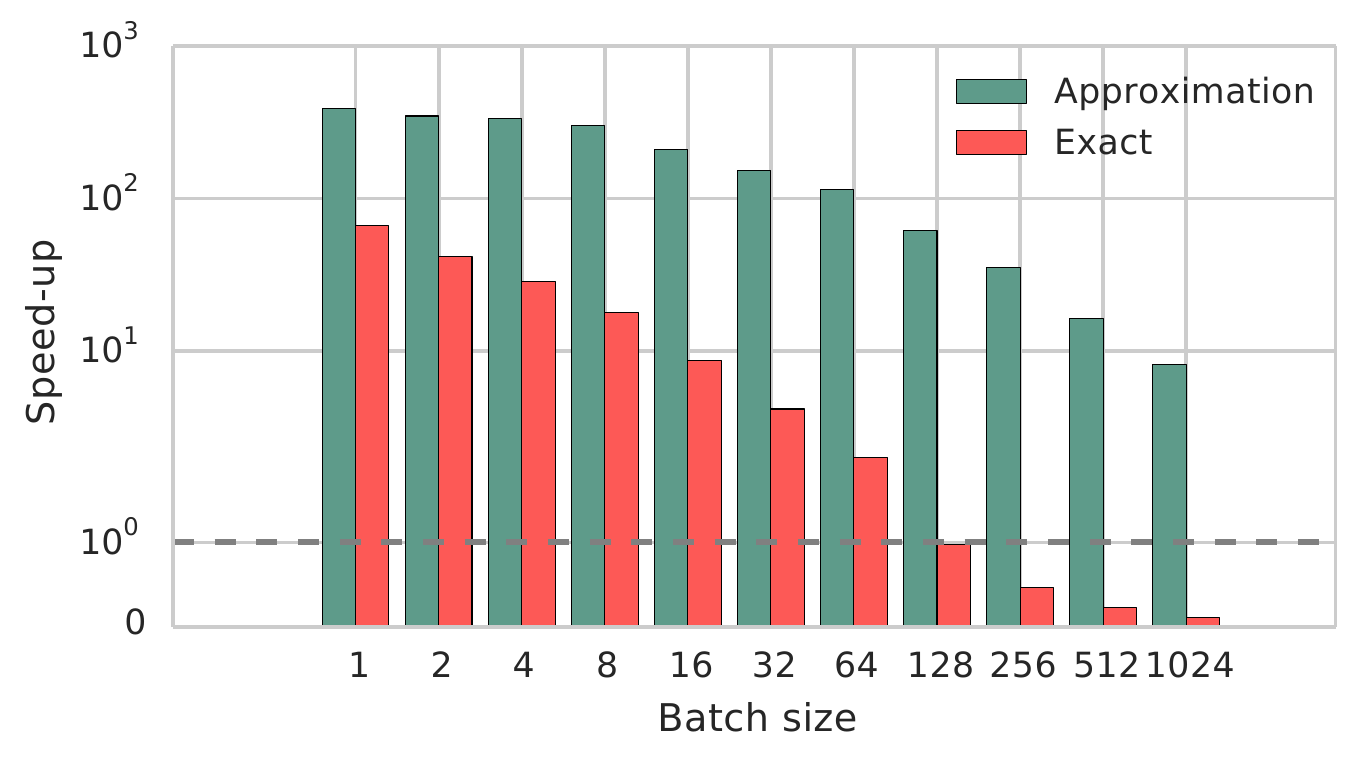}
\caption{Running times and speedups on Dorogovtsev-Mendes synthetic graphs ($m= 40k$), with $\epsilon = 0.05$ with batches of different sizes. Left: running times of the four algorithms: static exact (\textsf{BA}), static approximation (\textsf{RK}), incremental exact (\textsf{GMB}) and \textsf{IA}. Right: comparison of the speedups of \textsf{GMB} on \textsf{BA} and of \textsf{IA} on \textsf{RK}.}
\label{unweighted}
\end{center}
\end{subfigure}

\begin{subfigure}[b]{\textwidth}
\begin{center}
\includegraphics[width = 0.45\textwidth]{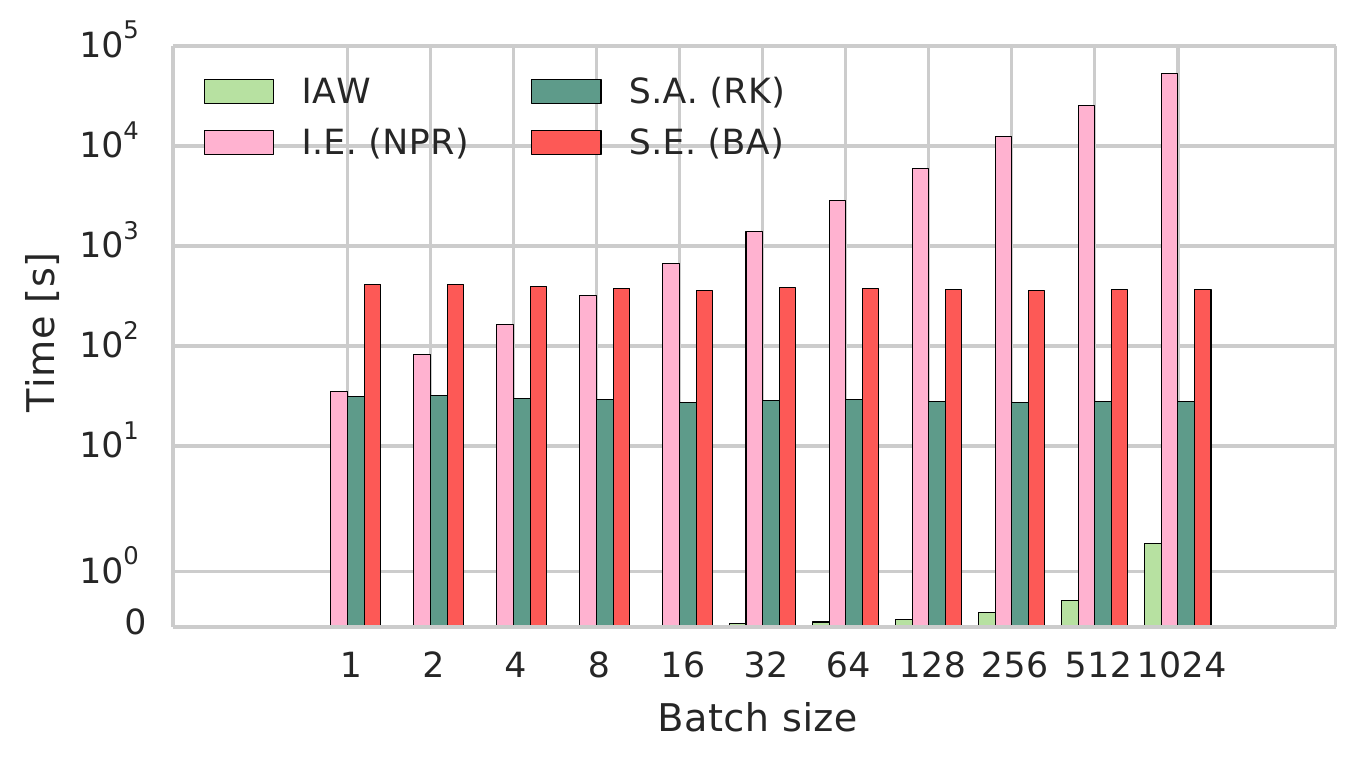}
\includegraphics[width = 0.45\textwidth]{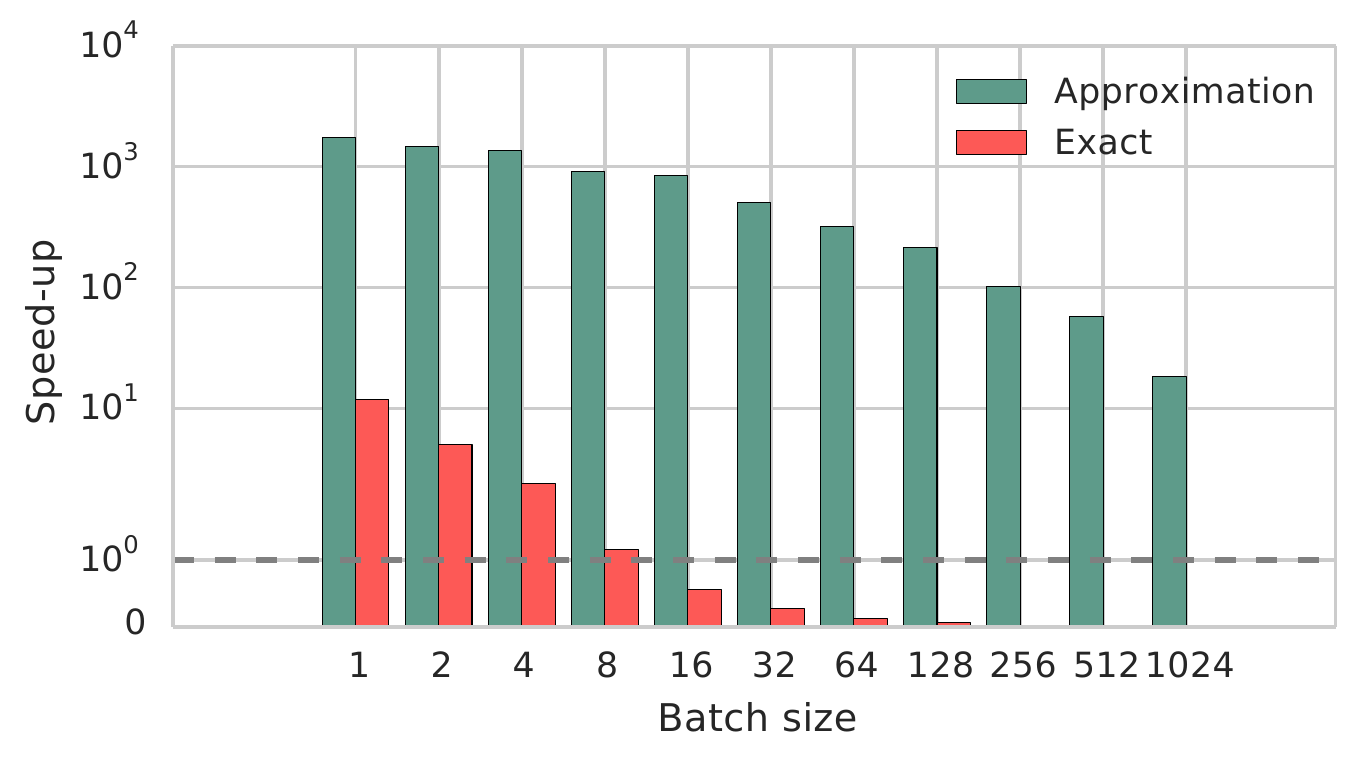}
\caption{Running times and speedups on \textit{weighted} synthetic graphs ($m= 40k$), with $\epsilon = 0.05$ with batches of different sizes. Left: running times of the four algorithms: static exact (\textsf{BA}), static approximation (\textsf{RK}), incremental exact (\textsf{NPR}) and \textsf{IAW}. Right: comparison of the speedups of \textsf{NPR} on \textsf{BA} and of \textsf{IAW} on \textsf{RK}.}
\label{weighted}
\end{center}
\end{subfigure}

\begin{subfigure}[b]{\textwidth}
\begin{center}
\includegraphics[width = 0.45\textwidth]{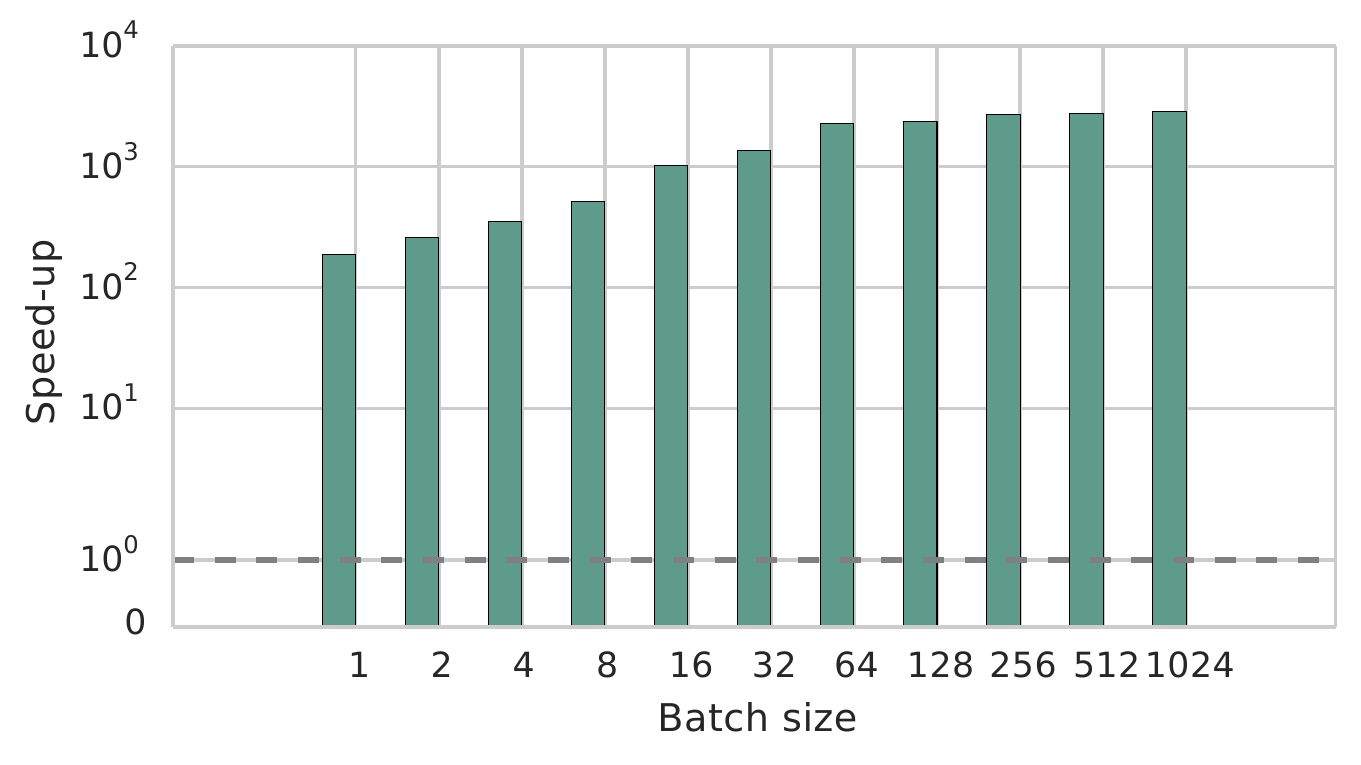}
\includegraphics[width = 0.45\textwidth]{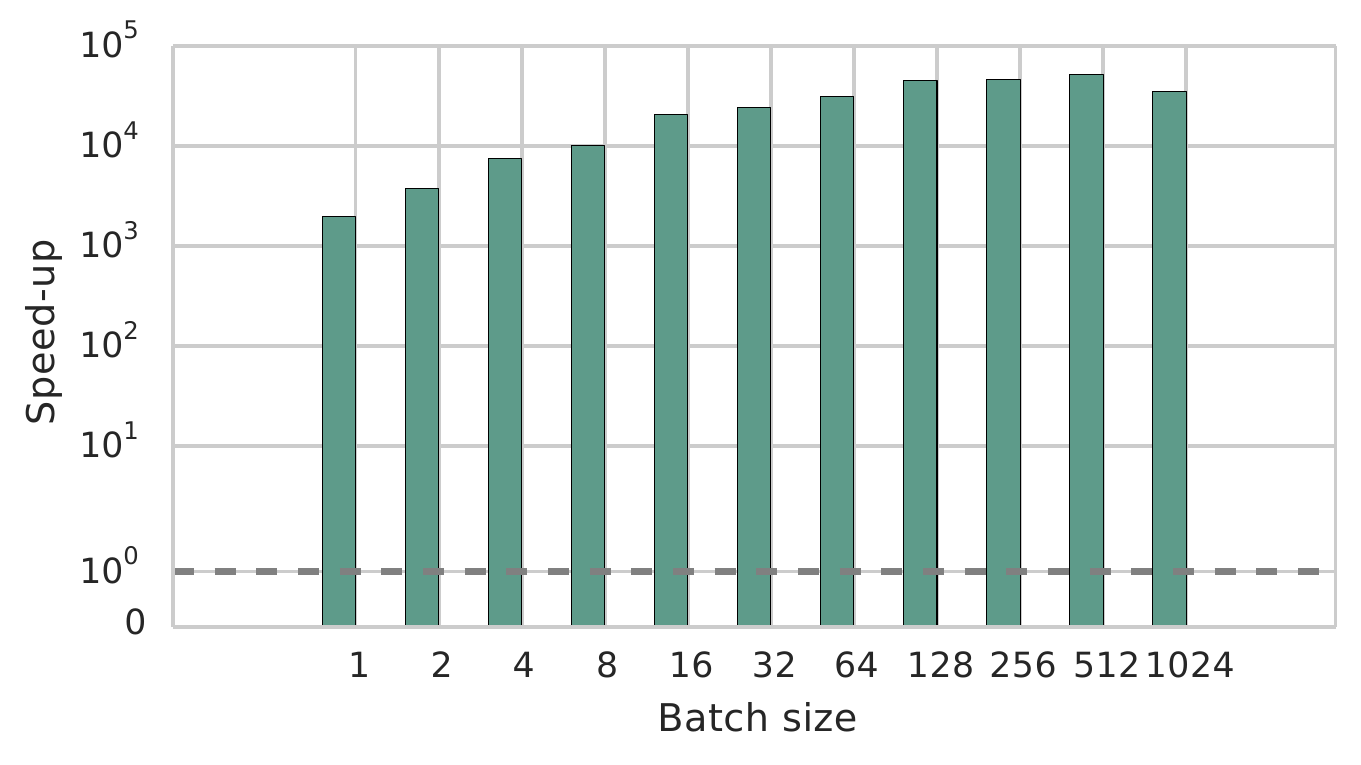}
\caption{Left: speedups of \textsf{IA} on \textsf{GMB} on unweighted synthetic graphs ($m= 40k$). Right: speedups of \textsf{IAW} on \textsf{NPR} on weighted synthetic graphs. \textsf{IA} and \textsf{IAW} run with $\epsilon = 0.05$.}
\label{speedups_dyn}
\end{center}
\end{subfigure}

\caption{Additional experimental results}
\end{figure}

\end{document}